\newcommand{\xijk}[0]{\ket{\xi_{j,k}}}
\newcommand{\zejk}[0]{\ket{\zeta_{j,k}}}
\newcommand{\zej}[1]{\ket{\zeta_{j,#1}}}
\newcommand{\hypo}{V_{S}} 
\newcommand{\ketbra}[2]{\vert #1 \rangle\langle #2 \vert}
\renewcommand{\Re}{\mathrm{Re}}
\newcommand{\Tr}{\mathrm{Tr}}
\renewcommand{\H}{\mathcal{H}}
\newcommand{\Sin}{S_{\text{in}}} 
\newcommand{\SQ}{S} 
\newcommand{\SQX}{S_X} 
\newcommand{\Hknown}{\H_{\SQX}}
\newcommand{\Hunknown}{\H_{\SQX^\bot}}
\newcommand{\dSQX}{d_{\SQX}}
\newcommand{\dSQY}{d_{\SQX^\bot}}
\newcommand*\diff{\mathop{}\!{d}}
\newcommand{\spn}{\mathrm{span}}
\newcommand{\card}{\mathrm{card}}
\newcommand{\rank}{\mathrm{rank}}
\newcommand{\Tor}{T_t^{\text{ortho}}}
\newcommand{\Tld}{T_t^{\text{ld}}}
\newcommand{\Ar}{\overline{r}} 
\newtheorem{lemma}{Lemma}
\theoremstyle{definition}
\newtheorem{definition}{Definition}
\begin{document}

\title{On Reducing the Amount of Samples Required for Training of QNNs: Constraints on the Linear Structure of the Training Data}

\author{Alexander Mandl}
\email{mandl@iaas.uni-stuttgart.de}
\author{Johanna Barzen}
\email{barzen@iaas.uni-stuttgart.de}
\author{Frank Leymann}
\email{leymann@iaas.uni-stuttgart.de}
\author{Daniel Vietz}
\email{vietz@iaas.uni-stuttgart.de}
\affiliation{Institute of Architecture of Application Systems, University of Stuttgart, Universitätsstraße 38, 70569 Stuttgart, Germany}

\maketitle

\begin{abstract}
  Training classical neural networks generally requires a large number of training samples.
  Using entangled training samples, Quantum Neural Networks (QNNs) have the potential to significantly reduce the amount of training samples required in the training process.
  However, to minimize the number of incorrect predictions made by the resulting QNN, it is essential that the structure of the training samples meets certain requirements.
  On the one hand, the exact degree of entanglement must be fixed for the whole set of training samples.
  On the other hand, training samples must be linearly independent and non-orthogonal. 
  However, how failing to meet these requirements affects the resulting QNN is not fully studied.
  To address this, we extend the proof of the QNFL theorem to
  (i)~provide a generalization of the theorem for varying degrees of entanglement.
  This generalization shows that the average degree of entanglement in the set of training samples can be used to predict the expected quality of the QNN.
  Furthermore, we (ii)~introduce new estimates for the expected accuracy of QNNs for moderately entangled training samples that are linear dependent or orthogonal. 
  Our analytical results are (iii)~experimentally validated by simulating QNN training and analyzing the quality of the QNN after training.
\end{abstract}

\section{Introduction}\label{sec:intro}
Building upon the success of  Machine Learning (ML) in various applications~\cite{Silver2016,Krizhevsky2012} and the recent advances in quantum computing, the field of Quantum Machine Learning (QML) has steadily garnered interest in the recent years~\cite{Schatzki2021,Caro2022,Huang2021,Zhang2010,Cerezo2022}. 
With the improving capabilities of available quantum computers, applications of QML are increasingly approaching the realm of practical usage~\cite{Benedetti2019}. 
Among these applications are Quantum Neural Networks (QNNs)~\cite{Beer2020}, which aim to approximate quantum processes, as described by unitary transformations, using a set of quantum states as samples for training.
As for classical neural networks, there are various architectures for QNNs~\cite{Beer2020,Benedetti2019,Schuld2020} and their trainability~\cite{Grant2019,Pesah2021}, expressibility~\cite{Cerezo2021,Du2020,Sim2019}, and general performance~\cite{Huang2021,Poland2020,Sharma2022} are a topic of active research.

A critical property of classical and quantum neural network training algorithms is that they cannot approximate all possible functions equally well.
Hence, for every neural network that reproduces a particular function with low error, there are other functions on which it performs poorly~\cite{Wolpert2001,Wolpert1996,Wolpert2020}.
This error is expressed as the \emph{risk}, which is the average approximation error of a neural network when evaluated on all possible inputs.
The expected risk for a classical neural network thereby correlates with the number of training samples used~\cite{Poland2020,Sharma2022}, i.e., the more training samples used, the lower the risk.
Similarly, the number of training samples required to learn a given unitary transformation using a QNN grows with the exponential size of the input space~\cite{Poland2020}. 
However, in contrast to classical neural networks, the expected risk of the learned transformation in QNNs depends additionally on the degree of entanglement of these samples, as shown in the \textit{Quantum No-Free-Lunch (QNFL)} theorem~\cite{Sharma2022}.
This theorem gives a lower bound for the expected risk and shows that with maximal entanglement, only one training sample is required to learn a unitary operator with minimal risk.
Thus, entanglement can significantly reduce the number of required training samples. 
However, there are still obstacles for applying this reduction of required data in practice. 
In particular, the exact structure of the training samples is subject to certain restrictions.
For example, training samples are assumed to have a fixed degree of entanglement~\cite{Sharma2022}.
Unfortunately, a way to generate such samples may not always be accessible, and controlling training samples' degrees of entanglement may be challenging.
Therefore, training samples with varying degrees of entanglement must be considered for training.
Since the QNFL theorem assumes samples to have a constant degree of entanglement~\cite{Sharma2022}, no estimate for the risk is available in this case.
Another requirement is that training for minimal risk requires non-orthogonal and linearly independent training samples.
Even for a fixed degree of entanglement, it is unknown how the expected risk is influenced if these requirements are not met.

\paragraph{Contributions:}
In this work, we (i)~evaluate the impact of a varying degree of entanglement in the training samples on the risk in QNNs.
We prove lower bounds for the risk showing that the average degree of entanglement can, in fact, be used to estimate the risk for varying entanglement.
Furthermore, we (ii)~investigate how the training samples' structure affects the risk in QNNs.
In particular, we define the requirements of \textit{linear independence} and \textit{orthogonal partitioning restistance} for the training samples and describe how they affect the risk in QNNs.
This is achieved by providing more precise lower bounds for the expected risk for cases where these requirements are not met.
We show that instead of decreasing quadratically with the number of training samples, the risk only decreases linearly for pairwise orthogonal samples and does not decrease at all for linearly dependent training samples.
Finally, (iii)~we verify the validity of the proposed bounds for the risk experimentally by training QNNs using different structures of entangled training samples.

This paper is structured as follows:
\Cref{sec:background} provides the background, fundamentals, and a detailed problem statement.
\Cref{sec:mod_ent} proves the lower bound for the risk for training samples of varying degrees of entanglement.
\Cref{sec:data_structure} investigates the effect of the training sample structure on the risk and defines requirements for minimizing the risk.
\Cref{sec:design} describes the experiment setup used to verify the proposed risk bounds and discusses its results.
\Cref{sec:related_work} presents related work.
Finally, \Cref{sec:conc} concludes this contribution and highlights future research opportunities.

\section{Background, Fundamentals, and Problem Statement}\label{sec:background}
This section introduces the fundamentals of the \textit{No-Free-Lunch (NFL)} theorem for classical supervised learning in \Cref{subsec:nfl}, \textit{Quantum Neural Networks (QNNs)} in \Cref{subsec:qnns}, and the \textit{Quantum No-Free-Lunch (QNFL)} theorem in \Cref{sec:qnflintro}.
Furthermore, \Cref{subsec:motiv} discusses the motivation of this paper and presents a detailed problem statement.

\subsection{No-Free-Lunch in Supervised Learning}\label{subsec:nfl}
In supervised learning, the goal is to compute a hypothesis function $h:X \to Y$, which approximates a given function $f: X \to Y$ as best as possible. 
The only assumption on the algorithm used to obtain the hypothesis function $h$ is that it makes use of a set of training sample pairs $S=\{(x, f(x)) \;\mid\; x \in X\}$ of size $|S|=t$. 
The performance of a learning algorithm on a given function $f$ can be measured by the \textit{risk}.
The risk $R_f(h)$ is defined as the probability $\mathbb{P}$ over the uniform distribution of elements $x\in X$ that the outputs of the \emph{hypothesis} function $h$ do not match those of target function $f$~\cite{Poland2020,Shalev_Shwartz_2014}:
\begin{equation}
    R_f(h) = \mathbb{P}_{x\in X}[h(x) \neq f(x)].
\end{equation}

Computing the risk during the training process requires knowledge of the outputs of $f$ on all possible inputs, which is generally not possible.
However, a statement can be made about the lower bound of the risk using an NFL~\cite{Wolpert2001,Wolpert1996} theorem.
NFL theorems generally state that two algorithms perform identically when their performance is averaged over all possible problems of a domain~\cite{Adam_2019}.
First shown for optimization and search problems~\cite{wolpert1997}, NFL theorems have also been proposed for other areas, e.g., network science~\cite{Peel2017} or supervised learning~\cite{Wolpert2001}.
The NFL theorem for supervised learning~\cite{Poland2020,Sharma2022,Volkoff2021} provides a lower bound for the expected risk $\mathbb{E}_f\left[\mathbb{E}_S\left[R_f(h)\right]\right]$ after learning $h$ for all possible functions $f$ and all possible sets of training samples $S$:
\begin{equation}
    \mathbb{E}_f\left[\mathbb{E}_S\left[R_f(h)\right]\right] \geq
    \left( 1 - \frac{1}{|Y|}\right)
    \left( 1 - \frac{t}{|X|}\right).\label{eq:nfl_supervised}
\end{equation}
It is important to note that \cref{eq:nfl_supervised} defines a general lower bound independent of the actual learning algorithm used.
Thus, on average, a hypothesis function $h$ obtained from one algorithm cannot predict the output of $f$ on unknown inputs better than a hypothesis function obtained by any other algorithm.

\subsection{Quantum Supervised Learning} \label{subsec:qnns}
Similar to a classical neural network, a QNN aims to learn an unknown operator that maps quantum states.
A QNN learns a unitary operator (the \emph{hypothesis unitary} $V_S$) that should closely match the target operator (the \emph{target unitary}) $U : \H_X \to \H_X$~\cite{Verdon2018}. 
In gate-based quantum computing, unitary operators are implemented as so-called quantum circuits, and one method to obtain the hypothesis unitary is to use a \textit{Parameterized Quantum Circuit (PQC)}~\cite{Verdon2018,Du2020,Sim2019,Schuld2020}. 
Thereby, a set of parameters is adjusted using a classical optimizer until a final set of optimal parameters is found. 
The optimal parameters construct a quantum circuit that reproduces the target unitary most accurately.
The algorithm that obtains the hypothesis unitary can make use of a set of training samples $S$ comprised of quantum inputs $\ket{\psi_j}$ and the expected quantum outputs $\ket{\phi_j} = U\ket{\psi_j}$~\cite{Beer2020,Verdon2018}
\begin{equation}
    \SQ = \{(\ket{\psi_j}, \ket{\phi_j})\;\vert\; \ket{\psi_j} \in \H_X, \ket{\phi_j} \in \H_X , 1 \leq j \leq t\} \label{eq:sep_training_data}.
\end{equation}

Regardless of how the hypothesis unitary is described, a training algorithm continuously adjusts the quantum circuit $V_S$ until it matches $U$ when evaluated on the training samples $S$. 
It therefore minimizes the following function $L(\hypo)$ known as the \textit{loss function}~\cite{Sharma2022,Beer2020}
\begin{equation}
L(\hypo) = 1 - \frac{1}{t}\sum_{j=1}^t \left|\Braket{\psi_j | U^\dagger \hypo | \psi_j} \right|^2 \label{eq:qnncost}.
\end{equation}
This loss function is minimal for hypothesis circuits that maximize the \emph{fidelity}~\cite{Wilde2011,Beer2020} $\left|\Braket{\psi_j | U^\dagger \hypo | \psi_j} \right|^2$ between their output $\hypo \ket{\psi_j}$ and the expected output of the target unitary $U \ket{\psi_j}$. 
Therefore, it measures the closeness between the two transformations with respect to the training samples.

\subsection{The Quantum No-Free-Lunch Theorem}\label{sec:qnflintro}
Instead of using the training samples $S$ directly to optimize the hypothesis unitary, the training samples can be enhanced using entanglement.
The individual training inputs $\ket{\psi_j}$ and outputs $\ket{\phi_j}$ are then elements of a combined space $\H_{XR} = \H_X \otimes \H_R$, with $\dim(\H_X) = d$:
\begin{equation}
    \SQ = \{(\ket{\psi_j},\ket{\phi_j})\;\vert\;\ket{\psi_j} \in \H_{XR}, \ket{\phi_j} \in \H_{XR}, 1 \leq j \leq t\}.\label{eq:training_data_entangled}
\end{equation}
The space $\H_R$ is called the \emph{reference system}.
Similar to training without entanglement (see \cref{eq:sep_training_data}), $U$ still only acts on the space $\H_X$ and leaves the reference system unchanged:
\begin{equation}
    \ket{\phi_j} = (U \otimes I)\ket{\psi_j}.\label{eq:exp_output}
\end{equation}
To incorporate the reference system, the loss function in \cref{eq:qnncost} must also be adapted by including the identity operator on $\H_R$:
\begin{equation}
\begin{aligned}
        L(\hypo) &= 1 - \frac{1}{t}\sum_{j=1}^t \left|\Braket{\psi_j | (U^\dagger \hypo \otimes I) | \psi_j} \right|^2\\
        &=1 - \frac{1}{t}\sum_{j=1}^t \left|\Braket{\phi_j | \widetilde{\phi_j}} \right|^2.
    \label{eq:ent_qnn_cost}
\end{aligned}
\end{equation}
Herein, $\ket{\phi_j}$ is referred to as the expected output (\cref{eq:exp_output}), and $\ket{\widetilde{\phi_j}} = (V_S \otimes I)\ket{\psi_j}$ is referred to as the actual output of the hypothesis unitary.

Entanglement with the reference system enables to differentiate the action of the target operator $U$ on individual basis states in $\ket{\psi_j}$ through correlations between the measurements in $\H_X$ and $\H_R$.
Employing a reference system, therefore, increases the amount of information that can be inferred about $U$ in the training process.
As a measure of the degree of entanglement of the individual states $\ket{\psi_j}$, with respect to the factorization of $\H_{XR}$, the \emph{Schmidt rank} $r$~\cite{Sharma2022, Wilde2011} is used. 
Using the reformulated QNFL theorem~\cite{Sharma2022} it is possible to give a lower bound for the expected risk for training with entangled samples.
The theorem assumes that a QNN is trained using entangled training samples that all have the same Schmidt rank~\cite{Sharma2022}. 
Under this assumption, the risk $R_U(\hypo)$ is calculated by averaging the squared distance between the outputs $\ket{y} = U\ket{x}$ and $\ket{\tilde{y}} = \hypo\ket{x}$ of uniformly random sampled quantum states $\ket{x} \in \H_X$~\cite{Poland2020,Sharma2022}:
\begin{equation}
    R_U(\hypo) := \int D(\ketbra{y}{y}, \ketbra{\tilde{y}}{\tilde{y}})^2 \diff x.\label{eq:state_distance}
\end{equation}
Herein, $D(\rho, \sigma) = \frac{1}{2}\Tr\left[|\rho-\sigma|\right]$ with $|A|=\sqrt{A^\dagger A}$ is the $\emph{trace distance}$~\cite{nielsen2002quantum} between quantum states.
According to~\cite{Wilde2011}, for pure states $\ket{y}$ and $\ket{\tilde{y}}$, the trace distance is equivalent to $\sqrt{1 - \left| \braket{y|\tilde{y}} \right|^2} = \sqrt{1 - \left| \braket{x| U^\dagger \hypo |x} \right|^2}$, which implies
\begin{equation}
\begin{aligned}
    R_U(\hypo) &= \int 1 - \left| \braket{x| U^\dagger \hypo |x} \right|^2 \diff x \\
    &= 1 - \int \left| \braket{x| U^\dagger \hypo |x} \right|^2 \diff x\\
    &= 1 - \overline{F}(\hypo, U)\label{eq:risk_fid}.
\end{aligned}
\end{equation}
Thus, the risk is the complement of the average fidelity $\overline{F}(\hypo, U)$~\cite{Wilde2011, Nielsen2002fidelity} of the quantum gate $\hypo$ with respect to $U$.
In other words, the risk measures how well the transformation $\hypo$ reproduces $U$~\cite{Sharma2022}.
Using this interpretation, \cref{eq:risk_fid} is further reformulated (see ~\cite{Poland2020,Nielsen2002fidelity,Fortunato_2002}) to give
\begin{equation}
    R_U(\hypo) = 1 - \frac{d + \left|\Tr[U^\dagger \hypo]\right|^2}{d(d+1)}\label{eq:risk_trace}.
\end{equation}
By investigating the expected squared absolute trace of $U^\dagger \hypo$ over all unitary matrices $U$ and all training sets $S$, Sharma~et~al. show their reformulation of the QNFL theorem for entangled training samples~\cite{Sharma2022}:

\begin{equation}
    \mathbb{E}_U \left[\mathbb{E}_S\left[R_U(\hypo)\right]\right] \geq 
    1 - \frac{(rt)^2 + d + 1}{d(d+1)}\label{eq:qnfl}.
\end{equation}

\subsection{Problem Statement}\label{subsec:motiv}
Similar to other NFL theorems, \cref{eq:qnfl} shows that the average risk after training a QNN depends on the number of training samples $t$ and not on the specific strategy used to obtain the hypothesis unitary. 
In contrast to classical neural networks, this lower bound can also be decreased by increasing the Schmidt rank $r$ of the entangled training samples.
This presents a unique opportunity for QML because using entangled training samples reduces the number of training samples required to train a QNN. 
Particularly, \cref{eq:qnfl} implies that using training samples of maximal Schmidt rank $r=d$, the QNN could be trained using only one training pair. 
But even in cases where maximal entanglement is impossible, this theorem shows that increasing the Schmidt rank and using moderately entangled samples is desirable.
Sharma~et~al.~\cite{Sharma2022} specify certain requirements the training samples must fulfill to saturate the lower bound in \cref{eq:qnfl} using as few training samples as possible.
To obtain a hypothesis unitary of minimal risk, the training samples must be linearly independent and should not be orthogonal.
Furthermore, the training states must have a fixed Schmidt rank $r$.

To meet these requirements, it is necessary to have full control over the structure of the available training samples.
However, such control is not always guaranteed, and the requirements might not be satisfied.
Yet it is still highly desirable to train for minimal risk with as few training samples as possible.
Therefore, the main goal of this work is to analyze the impact of not meeting the requirements for minimum risk on the quality of the trained QNNs.
First, we investigate the quality of trained QNNs using training samples where a fixed Schmidt rank is not guaranteed.
Therefore, the first research question is as follows: \emph{\enquote{RQ1: How does using training samples of varying Schmidt ranks influence the risk in QNN training?}}
We address the research question by examining the proof of the QNFL theorem and providing a generalization of the theorem for the average Schmidt rank over all training samples (see \Cref{sec:mod_ent}).
Second, we investigate the influence of moderately entangled samples not fulfilling the structural requirements, leading us to the second research question: \emph{\enquote{RQ2: How does using linearly dependent or orthogonal training samples influence the risk for moderate entanglement in QNN training?}}
We address this research question by providing exact definitions of the required properties of the training samples (see \Cref{sec:data_structure}). 
The requirement for non-orthogonality of the training samples is captured in the definition of \emph{orthogonal partitioning resistance} of the training samples (\Cref{def:nonortho}), and the required linear independence is described by the \emph{linear independence w.r.t. the subspace $\H_X$} (\Cref{def:lihx}). 
Furthermore, we prove lower bounds for the risk if these properties are not satisfied.

\section{Proving the QNFL Theorem for Moderately Entangled Samples}\label{sec:mod_ent}
In this section, we revisit the proof of the QNFL theorem for entangled training samples~\cite{Sharma2022} and show that it remains valid, even if not all training samples have the same Schmidt rank.

Using the Schmidt decomposition~\cite{nielsen2002quantum}, the inputs $\ket{\psi_j}$ of Schmidt rank $r_j$ in the set of training samples $\SQ$ (see  \cref{eq:training_data_entangled}) are decomposed as 
\begin{equation}
        \ket{\psi_j} = \sum_{k=1}^{r_j} \sqrt{c_{j,k}} \; \xijk_X \zejk_R, \label{eq:schmdecomp}
\end{equation}
where $\{\xijk\}$ and $\{\zejk\}$ are subsets of orthonormal bases for $\H_X$ and $\H_R$, with coefficients $c_{j,k} \in \mathbb{R}_{> 0}$ such that $\sum_{k=1}^{r_j} c_{j,k} = 1$. 
The dimension of the reference system $\dim(\H_R)$ can be arbitrary as long as it is large enough to hold the $r_j$ orthogonal states $\zejk$ in the Schmidt decomposition in \cref{eq:schmdecomp}.
Therefore, $\dim(\H_R) \geq r_j$ is required.
In contrast to the former definition (\Cref{sec:qnflintro}) of the set of entangled training samples, we assume that each state $\ket{\psi_j}$ has an unknown Schmidt rank $r_j \geq 1$, while only the average Schmidt rank $\Ar$ over all states is known:
\begin{equation}
    \Ar = \frac{1}{t}\sum_{j=1}^{t}r_j.
\end{equation}
Furthermore, using the decomposition of $\ket{\psi_j}$, the expected outputs according to \cref{eq:exp_output} are given as
\begin{equation}
    \ket{\phi_j} = (U \otimes I) \ket{\psi_j} = \sum_{k=1}^{r_j} \sqrt{c_{j,k}} \; U\xijk_X \zejk_R
\end{equation}
and the actual outputs of the hypothesis unitary are
\begin{equation}
    \ket{\widetilde{\phi_j}} = (\hypo \otimes I) \ket{\psi_j} = \sum_{k=1}^{r_j} \sqrt{c_{j,k}} \; \hypo\xijk_X \zejk_R.
\end{equation}

Each training sample in $S$ consists of an input and an output state.
The set $\Sin := \{ \ket{\psi_j} \;\vert\; (\ket{\psi_j}, \ket{\phi_j}) \in S\}$ contains only the inputs of all training samples and each $\ket{\psi_j} \in \Sin$ is referred to as a training input.
To describe the information contained in the training inputs about the action of $U$ on $\H_X$, we introduce the set $\SQX := \{\xijk \;|\; 1\leq j \leq t, 1 \leq k \leq r_j\}$, where the individual $\xijk \in \H_X$ are elements from the orthonormal bases in the Schmidt decomposition of states in $\Sin$ as in \cref{eq:schmdecomp}. 

Using the set $\SQX$, the space $\H_X$ is decomposed into two orthogonal subspaces $\H_X = \Hknown \oplus \Hunknown$. 
The former subspace $\Hknown = \spn(\SQX)$ is referred to in what follows as the \enquote{known} subspace.
For vectors in this subspace, $S$ contains the information required to reproduce their image under the application of $U$. 
Furthermore, a maximal linear independent subset \mbox{$B(\SQX) = \{\ket{\xi_1}, \dots, \ket{\xi_{\dSQX}}\}\subseteq \SQX$, that contains $\dSQX = \dim(\Hknown)$} elements, constitutes a basis of $\Hknown$.
As a result, the dimension of $\Hunknown$ is $\dSQY = d - \dSQX$.
In the following, the proof of the QNFL theorem for states of varying degrees of entanglement makes use of this decomposition of the space $\H_X$ to describe $U^\dagger \hypo$.

The remaining proof is divided into four subsections.
\Cref{subsec:perfect_training} describes the assumption of \emph{perfect training} and shows that under this assumption, we can treat the training inputs as eigenvectors of $U^\dagger \hypo$.
\Cref{subsec:decomposition_of_U} decomposes $U^\dagger\hypo$ by investigating its effect on $\Hknown$ and $\Hunknown$ separately and gives an upper bound for the absolute trace of $U^\dagger\hypo$ in $\Hknown$.
\Cref{subsec:exp_sqr_abs_tr} estimates the expected squared absolute trace of $U^\dagger\hypo$ using this upper bound.
Finally, \Cref{subsec:lower_bound_risk} uses this expectation value in \mbox{\cref{eq:risk_trace}} to prove the lower bound for the risk in the QNFL theorem for a varying degree of entanglement.

\subsection{Perfect Training}\label{subsec:perfect_training}

The derivation of the lower bound for the risk assumes \emph{perfect training} of the hypothesis unitary $V_S$ to calculate the expected risk after training, regardless of the specific training algorithm used~\cite{Sharma2022}.
This means that $V_S$ can perfectly reproduce the action of $U$ on the set of training samples and the performance of $V_S$ on possibly unknown inputs is estimated under this assumption.
Hence, the loss function used during training (\cref{eq:ent_qnn_cost}) is zero, which implies maximal fidelity $\left|\braket{\psi_j | (U^\dagger \hypo \otimes I) | \psi_j}\right|^2 =\left|\braket{\phi_j | \widetilde{\phi_j}}\right|^2 = 1$ for all training samples $(\ket{\psi_j}, \ket{\phi_j}) \in S$. 
Using \Cref{le:evfrominnerprod}, we show that, from the maximal fidelity after training, it follows that $\ket{\psi_j}$ is an eigenvector of $(U^\dagger \hypo \otimes I)$, which is used in the remaining proof to specify the form of $U^\dagger \hypo$.

\begin{lemma}\label{le:evfrominnerprod}
    For an arbitrary Hilbert space $\H$ and two quantum states $\ket{a},\ket{b} \in \H$: If $\braket{a | b} = e^{i\varphi}$ for $\varphi \in (-\pi, \pi]$, then
    \begin{equation}
        \ket{b} = e^{i\varphi}\ket{a}.
    \end{equation}
\end{lemma}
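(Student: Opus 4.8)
The claim is a rigidity statement for the Cauchy–Schwarz inequality: if two unit vectors have inner product of modulus one, they are parallel, and the precise phase relation is recorded because $\braket{a|b}=e^{i\varphi}$ fixes not just $|\braket{a|b}|=1$ but the phase as well. The plan is to compute the squared norm of the difference $\ket{b}-e^{i\varphi}\ket{a}$ directly and show it vanishes.

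Concretely, I would expand
\begin{equation}
\norm{\ket{b}-e^{i\varphi}\ket{a}}^2 = \braket{b|b} - e^{i\varphi}\braket{b|a} - e^{-i\varphi}\braket{a|b} + \braket{a|a}.
\end{equation}
Since $\ket{a},\ket{b}$ are quantum states we have $\braket{a|a}=\braket{b|b}=1$. Using the hypothesis $\braket{a|b}=e^{i\varphi}$ together with $\braket{b|a}=\overline{\braket{a|b}}=e^{-i\varphi}$, the two cross terms become $-e^{i\varphi}e^{-i\varphi}=-1$ and $-e^{-i\varphi}e^{i\varphi}=-1$, so the right-hand side equals $1-1-1+1=0$. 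A vector of zero norm in a Hilbert space is the zero vector, hence $\ket{b}=e^{i\varphi}\ket{a}$.

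There is essentially no obstacle here — the only thing to be careful about is making the normalization assumption explicit (the lemma calls $\ket{a},\ket{b}$ "quantum states," so unit norm is implicit) and noting that the restriction $\varphi\in(-\pi,\pi]$ is merely a choice of representative for the phase, playing no role in the computation. This lemma will be applied in \Cref{subsec:perfect_training} with $\H = \H_{XR}$, $\ket{a}=\ket{\phi_j}$, and $\ket{b}=\ket{\widetilde{\phi_j}}$: perfect training gives $|\braket{\phi_j|\widetilde{\phi_j}}|=1$, so up to a global phase $\braket{\phi_j|\widetilde{\phi_j}}=e^{i\varphi_j}$, and the lemma yields $\ket{\widetilde{\phi_j}}=e^{i\varphi_j}\ket{\phi_j}$, i.e.\ $(V_S\otimes I)\ket{\psi_j}=e^{i\varphi_j}(U\otimes I)\ket{\psi_j}$, which is exactly the eigenvector statement needed for the decomposition of $U^\dagger V_S$.
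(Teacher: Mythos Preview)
Your argument is correct. The paper's proof is slightly different in technique: it writes $\ket{b}=\alpha\ket{a}+\beta\ket{a^\bot}$ with $\braket{a|a^\bot}=0$, reads off $\alpha=e^{i\varphi}$ from $\braket{a|b}$, and then invokes the normalization $|e^{i\varphi}|^2+|\beta|^2=1$ to force $\beta=0$. Your direct computation of $\norm{\ket{b}-e^{i\varphi}\ket{a}}^2$ avoids introducing the orthogonal complement explicitly and is arguably cleaner; the paper's decomposition, on the other hand, makes it visually explicit which component must vanish. Both are standard ways of handling the equality case of Cauchy--Schwarz, and neither carries any real advantage over the other for this lemma.
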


\begin{proof}
    There is a vector $\ket{a^\bot}$ with $\braket{a|a^\bot} = 0$ such that it is possible to write $\ket{b} = \alpha \ket{a} + \beta \ket{a^\bot}$, with $\alpha, \beta \in \mathbb{C}$.
    Thus, $\ket{b} = e^{i\varphi} \ket{a} + \beta \ket{a^\bot}$, which follows from the expansion of the inner product as
    \begin{align}
        e^{i\varphi} &= \braket{a|b}\\
        &= \alpha \braket{a | a} + \beta \braket{a | a^\bot}\\
        &= \alpha.
    \end{align}
    Since $\ket{a}$ is a normalized quantum state, i.e., $\left|e^{i\varphi}\right|^2 + |\beta|^2 = 1$, it follows that $\beta = 0$ and $\ket{b} = e^{i \varphi} \ket{a}$.
\end{proof}

The maximal fidelity $\left|\braket{\phi_j | \widetilde{\phi_j}}\right|^2 = 1$ of the training outputs implies that \mbox{$\braket{\phi_j | \widetilde{\phi_j}} = e^{i\theta_j}$} for some $\theta_j \in (-\pi, \pi]$, which using \Cref{le:evfrominnerprod} shows 
\begin{equation}
    \ket{\widetilde{\phi_j}} = e^{i\theta_j}\ket{\phi_j}.
    \label{eq:outputphase}
\end{equation}
Therefore, perfect training implies that only a not measurable phase $\braket{\psi_j|(U^\dagger \hypo \otimes I)|\psi_j} = e^{i\theta_j}$ distinguishes the actual outputs $\ket{\widetilde{\phi_j}}$ of the hypothesis unitary from the expected outputs $\ket{\phi_j}$ in the set of training samples $\SQ$. 
Furthermore, by substituting the definition of $\ket{\phi_j}$ and $\ket{\widetilde{\phi_j}}$ into \cref{eq:outputphase}, it follows that each $\ket{\psi_j}$ is an eigenvector of $(U^\dagger \hypo \otimes I)$ with eigenvalue $e^{i\theta_j}$:
\begin{align}
    &(\hypo \otimes I) \ket{\psi_j} = e^{i\theta_j} (U \otimes I)\ket{\psi_j}\label{eq:perfect_training_sample_req}\\
    \Leftrightarrow\quad&(U^\dagger\hypo \otimes I) \ket{\psi_j} = e^{i\theta_j} \ket{\psi_j}\label{eq:perfect_training_sample_eigenvalue}
\end{align}

An integral part of the proof is that, due to the entanglement, the phase $\theta_j$ can also be used to describe the eigenvalues of $U^\dagger\hypo$ on the individual states in $\SQX$~\cite{Sharma2022}.
We formulate this as a separate lemma.
\begin{lemma}\label{le:phaseform}
    Each state $\xijk \in \SQX$, that is obtained from the Schmidt decomposition of an input $\ket{\psi_j} \in \Sin$, is an eigenvector of $U^\dagger \hypo$ with eigenvalue $e^{i\theta_j} = \braket{\psi_j | (U^\dagger \hypo \otimes I) | \psi_j}$.
    \emph{(The proof for this Lemma is given in \Cref{app:phaseformproof})}
\end{lemma}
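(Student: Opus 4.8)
The plan is to start from the eigenvalue relation that perfect training has already established, namely \cref{eq:perfect_training_sample_eigenvalue}, $(U^\dagger\hypo \otimes I)\ket{\psi_j} = e^{i\theta_j}\ket{\psi_j}$ with $e^{i\theta_j} = \braket{\psi_j|(U^\dagger \hypo \otimes I)|\psi_j}$. Abbreviating $W := U^\dagger\hypo$ on $\H_X$, I would substitute the Schmidt decomposition \cref{eq:schmdecomp} of $\ket{\psi_j}$ into both sides. Since $W \otimes I$ acts trivially on the reference system, the left-hand side becomes $\sum_{k=1}^{r_j}\sqrt{c_{j,k}}\,(W\xijk)_X\zejk_R$, while the right-hand side is $e^{i\theta_j}\sum_{k=1}^{r_j}\sqrt{c_{j,k}}\,\xijk_X\zejk_R$.

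Next I would exploit that, for a fixed $j$, the reference states $\{\zejk\}_{k=1}^{r_j}$ are orthonormal by construction of the Schmidt decomposition. Taking the partial inner product $(I_X \otimes \bra{\zeta_{j,l}}_R)$ of both sides picks out exactly the $k=l$ summand on each side, yielding $\sqrt{c_{j,l}}\,W\xij{l} = e^{i\theta_j}\sqrt{c_{j,l}}\,\xij{l}$. Because every Schmidt coefficient satisfies $c_{j,l} \in \mathbb{R}_{>0}$, dividing by $\sqrt{c_{j,l}}$ gives $W\xij{l} = e^{i\theta_j}\xij{l}$, i.e.\ $\xij{l}$ is an eigenvector of $U^\dagger\hypo$ with eigenvalue $e^{i\theta_j}$. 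Letting $l$ range over $1,\dots,r_j$ and $j$ over all training inputs in $\Sin$ then covers every element of $\SQX$, which is the claim.

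There is no serious obstacle here; the one point deserving a little care is the justification for comparing coefficients termwise, which is precisely the orthonormality of the $\zejk$ for a \emph{fixed} $j$ — the families $\{\zejk\}_k$ for different $j$ need not be related, but this is irrelevant since each $\xijk$ is attached to a single $\ket{\psi_j}$. One could equivalently phrase the termwise comparison as: two vectors of $\H_X \otimes \H_R$ that coincide must have equal $\H_X$-components along each member of an orthonormal family in the $\H_R$ factor, applied to $\{\zejk\}_k$; this avoids mentioning partial inner products explicitly. I would also note that the strict positivity $c_{j,k} > 0$ is what makes the final division legitimate — the vanishing coefficients are exactly the Schmidt terms that have been dropped, which is why the sum only runs to $r_j$.
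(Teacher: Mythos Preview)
Your argument is correct and in fact considerably more direct than the paper's own proof. The key observation you make --- that \cref{eq:perfect_training_sample_eigenvalue} is a \emph{vector} identity in $\H_X\otimes\H_R$, so one may project onto each orthonormal reference state $\zej{l}$ to isolate the $\H_X$-components --- immediately yields $U^\dagger\hypo\,\xij{l}=e^{i\theta_j}\xij{l}$ after cancelling the strictly positive $\sqrt{c_{j,l}}$.

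The paper, by contrast, works from the \emph{scalar} form $e^{i\theta_j}=\braket{\psi_j|(U^\dagger\hypo\otimes I)|\psi_j}$: it rewrites this via partial traces as $e^{i\theta_j}=\sum_{k} c_{j,k}\,\beta_{j,k}$ with $\beta_{j,k}=\braket{\xi_{j,k}|U^\dagger\hypo|\xi_{j,k}}$, then argues via the triangle inequality (first that each $|\beta_{j,k}|=1$, then that the inequality saturates so all phases agree) to conclude $\beta_{j,k}=e^{i\theta_j}$, and finally invokes \Cref{le:evfrominnerprod} to upgrade this inner-product identity to the eigenvector statement. Your route bypasses all of this machinery because you exploit the full eigenvector equation rather than only its diagonal matrix element; what the paper's approach buys is perhaps a closer alignment with the presentation in~\cite{Sharma2022}, and it illustrates the saturation-of-triangle-inequality technique that reappears later in the paper (e.g.\ \cref{eq:tracexbound}), but for the lemma at hand your argument is both shorter and more transparent.
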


\subsection{Decomposition of $U^\dagger\hypo$}
\label{subsec:decomposition_of_U}
Using the decomposition of $\H_X$ into separate subspaces, we describe the effect of $U^\dagger\hypo$ on $\Hknown$ by the $\dSQX \times \dSQX$ matrix $X$ using the basis $B(\SQX)$. 
The effect of $U^\dagger \hypo$ on $\Hunknown$ is given by the $\dSQY \times \dSQY$ matrix $Y$.
It is possible to write $U^\dagger \hypo$ as
\begin{equation}
    U^\dagger \hypo = 
    \left(
    \begin{array}{c|c}
    X & 0\\
    \hline
    0 & Y
    \end{array}
    \right),\label{eq:uvdecomp}
\end{equation}
by applying \Cref{le:phaseform} to find $X$ and then inferring the form of the remaining blocks in the matrix as follows.
According to \Cref{le:phaseform}, each element $\ket{\xi_l} \in B(\SQX)$ is an eigenvector of $U^\dagger \hypo$ with eigenvalue $e^{i \theta_l}$. 
Thus, by using the spectral decomposition~\cite{nielsen2002quantum}, $X$ is given as 
\begin{equation}
    X = \left(
    \begin{array}{cccc}
        e^{i\theta_1} & 0 & \dots & 0\\
        0 & e^{i\theta_2} & & \vdots \\
        \vdots & & \ddots & 0  \\
        0 & 0 & \dots & e^{i\theta_{\dSQX}} \label{eq:Xform}
    \end{array}
    \right).
\end{equation}
Since $U^\dagger \hypo$ is unitary, its set of rows $\{\vec{v_i}\}$ and its set of columns $\{\vec{w_i}\}$ each form an orthonormal basis of $\H_X$.
For any $\vec{v_l}$ in the first $\dSQX$ rows of $U^\dagger\hypo$, the normalization implies that 
\begin{equation}
\begin{aligned}
    \|\vec{v_l}\| = 1 = \sum_{j=1}^{d} \left|v_{lj}\right|^2 &= |e^{i\theta_l}|^2 + \sum_{j=1,\;j\neq l}^{\dSQX}\left|v_{lj}\right|^2\\
    &= 1 + \sum_{j=1\;j\neq l}^{\dSQX} \left|v_{lj}\right|^2.
\end{aligned}
\end{equation}
Therefore, all entries apart from the diagonal entry $v_{ll}$ in the first $\dSQX$ rows of $U^\dagger \hypo$ are zero. 
Similarly, all entries apart from the diagonal entry $w_{ll}$ in the first $\dSQX$ columns are zero.
Hence, the matrices in the upper right and lower left corner of \cref{eq:uvdecomp} are zero.

Lastly, the absolute trace of $X$ is calculated using the triangle inequality and \cref{eq:Xform}:
\begin{equation}
    \left| \Tr\left[X\right] \right| = \left| \sum_{l=1}^{\dSQX} e^{i \theta_l} \right|
    \leq \sum_{l=1}^{\dSQX} \left| e^{i \theta_l} \right| = \dSQX. \label{eq:tracexbound}
\end{equation}
\noindent Therefore, the absolute trace of $U^\dagger \hypo$ with respect to the known subspace, as given by $\left|\Tr[X]\right|$, cannot surpass the dimension of this subspace, which is determined by the training samples.

\subsection{Expected Squared Absolute Trace of $U^\dagger\hypo$}
\label{subsec:exp_sqr_abs_tr}
Using this upper bound, the expected risk $\mathbb{E}_U\left[ \mathbb{E}_S \left[ R_U(\hypo) \right] \right]$ is evaluated with respect to all target matrices $U$ and all training sets $S$.
By the linearity of the expectation value and the reformulation of $R_U(V_S)$ in \cref{eq:risk_trace},
\begin{equation}
    \mathbb{E}_U\left[ \mathbb{E}_S \left[ R_U(\hypo) \right] \right]
    = 
    1- \frac{d + \mathbb{E}_U\left[ \mathbb{E}_S \left[ \left| \Tr[U^\dagger \hypo] \right|^2 \right] \right]}{d(d+1)}.
\end{equation}
Therefore, we first calculate the expected squared absolute trace $\mathbb{E}_U \left[\mathbb{E}_S\left[\left| \Tr\left[ U^\dagger \hypo\right]\right|^2\right]\right]$ using the decomposition of $U^\dagger \hypo$ from \Cref{subsec:decomposition_of_U}.
Since for a particular set of training samples $\SQ$, the outputs are completely given by the inputs $\Sin$ in conjunction with the target unitary $U$, it suffices to calculate the expectation value $\mathbb{E}_S$ for all possible sets of training inputs.
We therefore define the set of all collections of possible inputs of size $t$:
\begin{equation}
    T_t := \{ \Sin \;\mid\; \card(\Sin) = t\}.\label{eq:def_tt}
\end{equation}
The expected squared absolute trace is then given by the double integral
\begin{equation}
  \mathbb{E}_U\left[ \mathbb{E}_S \left[ \left| \Tr\left[U^\dagger \hypo \right]\right|^2 \right] \right]
    = \int_{U\in\mathcal{U}(d)} \int_{\Sin \in T_t} \left| \Tr\left[U^\dagger \hypo \right]\right|^2 
    \diff \mu_{T_t}(\Sin) \diff \mu_{\mathcal{U}(d)}(U),\label{eq:tracedoubleintegral}
\end{equation}
with the normalized measures $\mu_{\mathcal{U}(d)}$ and $\mu_{T_t}$.
For a specification of these measures and their associated measure spaces, refer to \Cref{app:measures}.
Since $\left| \Tr\left[U^\dagger \hypo \right]\right|^2$ is non-negative and measurable, Fubini's theorem~\cite{Cohn2015} is applied and the integral is first evaluated for all unitary matrices $U$~\cite{Sharma2022}.
We show in the following, that the result of this integral is independent of the actual set of training inputs $\Sin$, which makes the second integral obsolete.
In the following, we use the notation $\int f(U) \diff U = \int_{U\in\mathcal{U}(d)} f(U) \diff \mu_{\mathcal{U}(d)}(U)$ and $\int f(Y) \diff Y = \int_{Y\in\mathcal{U}(d^*)} f(Y) \diff \mu_{\mathcal{U}(d^*)}(Y)$ with $d^* = \dSQY$ for brevity.

Since $\left|\Tr\left[X\right]\right|$ is known, the integral with respect to the unknown matrix $Y$ that is sampled uniformly according to the Haar measure~\cite{Mezzadri2007, Collins_2006} on the unitary group $\mathcal{U}\left(\dSQY\right)$ remains:
\begin{align}
\int \left| \Tr\left[U^\dagger \hypo \right]\right|^2 \diff U &= 
\int  \left| \Tr[X] + \Tr[Y] \right|^2 \diff Y\\
&
\begin{aligned}
    = \left|\Tr[X]\right|^2 + \int |\Tr[Y]|^2 \diff Y \\
    + \int 2 \Re\left(\Tr[X] \cdot \overline{\Tr[Y]}\right) \diff Y.
\end{aligned}\label{eq:sqabstraceuv}
\end{align}

The first integral on the right-hand side evaluates to $1$ (see~\cite{Poland2020}). 
For the second integral on the right-hand side, the conjugate of the trace of $Y$ is replaced by the trace of its adjoint:
\begin{equation}
    \int 2 \Re\left(\Tr[X] \cdot \overline{\Tr[Y]}\right) \diff Y = \int 2 \Re\left(\Tr[X] \cdot {\Tr[Y^\dagger]}\right) \diff Y.
\end{equation}
The Haar measure is invariant with respect to multiplication with unitary matrices~\cite{Sharma2022}.
Therefore, we multiply with $-I$ to obtain the equality
\begin{align}
    \int 2 \Re\left(\Tr[X] \cdot {\Tr[Y^\dagger]}\right) \diff Y &= \int 2 \Re\left(\Tr[X] \cdot {\Tr[-IY^\dagger]}\right) \diff Y\\
    &= - \int 2 \Re\left(\Tr[X] \cdot {\Tr[Y^\dagger]}\right) \diff Y.\label{eq:rhseq2}
\end{align}
This equality is satisfied iff the integral is $0$. 
Therefore \cref{eq:sqabstraceuv} evaluates to $\left| \Tr\left[ X \right] \right|^2 + 1$ and the upper bound for $\left| \Tr\left[ X \right] \right|^2$ from \cref{eq:tracexbound} is applied to give
\begin{equation}
    \int \left| \Tr\left[U^\dagger \hypo \right]\right|^2 \diff U = \left| \Tr\left[ X \right] \right|^2 + 1 \leq \dSQX^2 + 1.
    \label{eq:uvsintegral}
\end{equation}

\subsection{Lower Bound for the Expected Risk}
\label{subsec:lower_bound_risk}
The dimension $\dSQX$ of the known subspace still depends on the training samples $S$ that were actually used.
Since $B(\SQX) \subseteq \SQX$ is a basis for $\Hknown$, the dimension $\dSQX = \dim(\Hknown)$ is bounded by 
\begin{equation}
    \dSQX \leq \card(\SQX) = \sum_{j=1}^t r_j = \left(\frac{1}{t} \sum_{j=1}^t r_j \right)t = \Ar t, \label{eq:ineq2}
\end{equation} 
which gives $\mathbb{E}_U \left[\mathbb{E}_S\left[\left| \Tr\left[ U^\dagger \hypo\right]\right|^2\right]\right] \leq (\Ar t)^2 + 1$.
Similar to~\cite{Sharma2022}, this inequality is applied to \cref{eq:risk_trace} to give the main statement of the QNFL theorem:

\begin{align}
    \mathbb{E}_U \left[\mathbb{E}_S\left[R_U(\hypo)\right]\right] 
    &= 1 - \frac{d + \mathbb{E}_U \left[\mathbb{E}_S\left[\left| \Tr\left[ U^\dagger \hypo\right]\right|^2\right]\right]}{d(d+1)}\\
    &\geq 
    1 - \frac{(\Ar t)^2 + d + 1}{d(d+1)}\label{eq:qnfl_average}.
\end{align}

\noindent \cref{eq:qnfl_average} extends existing results to show that the average Schmidt rank $\Ar$ can be used instead of a fixed Schmidt rank $r$ over all training inputs in the QNFL theorem.

\section{Effect of the Training Sample Structure on the Expected Risk}\label{sec:data_structure}
Having examined the proof of the QNFL theorem, this section investigates the two requirements for minimizing the risk after training a QNN.
\Cref{subsec:non_orthogonality} provides a definition for \textit{orthogonal partitioning resistant} (OPR) training samples, and gives detailed justifications for the presented definition.
\Cref{subsec:orthogonal} introduces lower bounds for the expected risk when the definition is unsatisfied.
Analogously, \Cref{subsec:linear_independence} provides a definition including justifications for the \textit{linear independence in $\H_X$}, and \Cref{subsec:linear_dependent} introduces lower bounds for the expected risk in exceptional cases.

\subsection{Non-Orthogonality}
\label{subsec:non_orthogonality}
OPR training samples ensure that the set of training samples can not be partitioned into orthogonal subsets as described in \Cref{def:nonortho}.

\vspace{0.5em}
\noindent
\fbox{%
    \parbox{\dimexpr\linewidth-2\fboxsep-2\fboxrule}{%
        \begin{definition}[Orthogonal Partitioning Resistant Training Samples]\label{def:nonortho}
            The set of training inputs $\Sin$ is \emph{orthogonal partitioning resistant} (OPR) iff for all its partitions $A \cup B = \Sin$, $A \cap B = \emptyset$: $A \not\subseteq \spn(B)^\bot$, or equivalently $B \not\subseteq \spn(A)^\bot$.
        \end{definition}
    }%
}
\vspace{0.5em}

This requirement follows from the estimation of the absolute trace of $X$ in the previous section (\cref{eq:tracexbound}).
Herein, the triangle inequality is applied, which saturates and, as a consequence, maximizes the absolute trace if the individual angles $\theta_l$, $\leq l \leq \dSQX$ are the same~\cite{ahlfors1953}.
However, this is not necessarily the case for all sets of training samples, as the example in \cref{sec:cex_ortho} with two orthogonal states shows.
The angles $\theta_l$ are obtained by the eigenvalues $e^{i\theta_l}$ of the states $\ket{\xi_l} \in B(\SQX)$ under the application of $U^\dagger \hypo$. 
From \Cref{le:phaseform}, it follows that if two states $\ket{\xi_a}, \ket{\xi_b} \in B(\SQX)$ are obtained from the Schmidt decomposition of the same training input $\ket{\psi_j} \in \Sin$, they share the eigenvalue $e^{i \theta_j}$.
We introduce \Cref{le:evnonortho} to show that the eigenvalues are shared among two training inputs if they are non-orthogonal and use this result to justify the requirement for OPR training samples as given in \Cref{def:nonortho}.

\begin{lemma}[Eigenvalue of non-orthogonal states~\cite{Sharma2022}]\label{le:evnonortho}
    If two training inputs $\ket{\psi_a}$, $\ket{\psi_b} \in \Sin$ are non-orthogonal, then $e^{i\theta_a} = \braket{\psi_a | (U^\dagger\hypo \otimes I)|\psi_a} = \braket{\psi_b | (U^\dagger\hypo \otimes I)|\psi_b} = e^{i\theta_b}$.
    \emph{(The proof for this Lemma is given in the Appendix of~\cite{Sharma2022})}
\end{lemma}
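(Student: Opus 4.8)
The plan is to notice that, under the perfect-training assumption, \cref{eq:perfect_training_sample_eigenvalue} makes both $\ket{\psi_a}$ and $\ket{\psi_b}$ eigenvectors of \emph{one and the same} operator $M := U^\dagger \hypo \otimes I$, with eigenvalues $e^{i\theta_a}$ and $e^{i\theta_b}$ respectively. Since $U^\dagger \hypo$ is unitary and the tensor product of unitaries is unitary, $M$ is unitary, hence normal; and eigenvectors of a normal operator belonging to distinct eigenvalues are orthogonal. The lemma is exactly the contrapositive of this standard fact, so the argument is essentially a one-line invocation of it. The only care needed is to confirm that $M$ really is unitary on all of $\H_{XR}$ (it is, since both tensor factors are), so that the adjoint/eigenvalue bookkeeping below is valid.

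To make this self-contained I would spell out the orthogonality argument directly. First, $M^\dagger = M^{-1}$, so $M^\dagger\ket{\psi_a} = e^{-i\theta_a}\ket{\psi_a}$, because the reciprocal of a unit-modulus number is its complex conjugate. Next I would evaluate the matrix element $\bra{\psi_a} M \ket{\psi_b}$ in two ways: letting $M$ act on the ket gives $e^{i\theta_b}\braket{\psi_a|\psi_b}$, while letting $M^\dagger$ act on the bra gives $e^{i\theta_a}\braket{\psi_a|\psi_b}$. Subtracting the two expressions yields $(e^{i\theta_a} - e^{i\theta_b})\braket{\psi_a|\psi_b} = 0$. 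By hypothesis $\ket{\psi_a}$ and $\ket{\psi_b}$ are non-orthogonal, i.e. $\braket{\psi_a|\psi_b} \neq 0$, so the scalar prefactor must vanish and $e^{i\theta_a} = e^{i\theta_b}$; recalling the identifications $e^{i\theta_a} = \braket{\psi_a | (U^\dagger\hypo \otimes I) | \psi_a}$ and likewise for $b$ from \cref{subsec:perfect_training} gives precisely the claimed equality.

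An equivalent route, avoiding even the mention of normal operators, is to work with the output states. Perfect training gives $\ket{\widetilde{\phi_a}} = e^{i\theta_a}\ket{\phi_a}$ and $\ket{\widetilde{\phi_b}} = e^{i\theta_b}\ket{\phi_b}$ by \cref{eq:outputphase}, while $\braket{\phi_a|\phi_b} = \braket{\psi_a|\psi_b} = \braket{\widetilde{\phi_a}|\widetilde{\phi_b}}$ since $U$ and $\hypo$ are unitary and act trivially on the reference system. Substituting the first two relations into $\braket{\widetilde{\phi_a}|\widetilde{\phi_b}}$ gives $e^{i(\theta_b - \theta_a)}\braket{\psi_a|\psi_b} = \braket{\psi_a|\psi_b}$, and non-orthogonality once more forces $e^{i\theta_a} = e^{i\theta_b}$. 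I expect no genuine obstacle here: the proof is short, and the only subtlety is the purely formal bookkeeping of adjoints together with the fact that the relevant operator lives on the enlarged space $\H_{XR}$ while acting as the identity on the reference factor.
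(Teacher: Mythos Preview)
Your proof is correct; the paper does not actually give its own argument for this lemma but defers entirely to the appendix of~\cite{Sharma2022}. The standard fact you invoke---that eigenvectors of a unitary (hence normal) operator with distinct eigenvalues are orthogonal, so non-orthogonal eigenvectors must share their eigenvalue---is precisely the expected route, and both of your two phrasings (via $M^\dagger = M^{-1}$ or via the output states and \cref{eq:outputphase}) are clean and complete.
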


\Cref{def:nonortho} states that it should not be possible that the set of inputs $\Sin$ can be partitioned into two orthogonal subsets.
By making use of \Cref{le:evnonortho}, \Cref{le:evopr} shows that the eigenvalues $e^{i\theta_j}$ are the same among all $\ket{\psi_j}$ if the set of training inputs $\Sin$ is OPR.

\begin{lemma}[Eigenvalue of OPR training samples]\label{le:evopr}
    If $\Sin$ is OPR, then there is $\theta \in (-\pi, \pi]$ such that for all $\ket{\psi_j} \in \Sin$: $\braket{\psi_j | (U^\dagger\hypo \otimes I)|\psi_j} = e^{i\theta}$.
\end{lemma}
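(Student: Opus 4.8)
The plan is to leverage \Cref{le:evnonortho} together with the combinatorial structure imposed by the OPR condition. The key observation is that non-orthogonality defines a relation on $\Sin$, and \Cref{le:evnonortho} says that whenever two training inputs are non-orthogonal, they carry the same eigenvalue of $(U^\dagger \hypo \otimes I)$. I would like to conclude that \emph{all} training inputs carry a common eigenvalue, and the OPR condition is precisely what rules out the obstruction. First I would make the relation transitive by passing to its transitive closure: define $\ket{\psi_a} \sim \ket{\psi_b}$ iff there is a finite chain $\ket{\psi_a} = \ket{\psi_{i_0}}, \ket{\psi_{i_1}}, \dots, \ket{\psi_{i_m}} = \ket{\psi_b}$ in $\Sin$ with consecutive members non-orthogonal. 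By \Cref{le:evnonortho} applied along the chain, $\sim$-equivalent inputs share the same phase $e^{i\theta_j}$.

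Next I would argue that $\sim$ has only one equivalence class, and this is where the OPR hypothesis enters. Suppose for contradiction that there are at least two classes; pick one class $A \subseteq \Sin$ and let $B = \Sin \setminus A$, so $A \cup B = \Sin$ and $A \cap B = \emptyset$ with both nonempty. By construction of $\sim$, no element of $A$ is non-orthogonal to any element of $B$ (otherwise they would lie in the same class). Hence every $\ket{\psi_a} \in A$ is orthogonal to every $\ket{\psi_b} \in B$, i.e. $A \subseteq \spn(B)^\bot$. This contradicts the OPR property from \Cref{def:nonortho}. Therefore $\sim$ has a single class, meaning all of $\Sin$ shares one common phase; setting $\theta$ to be that common value gives $\braket{\psi_j | (U^\dagger \hypo \otimes I) | \psi_j} = e^{i\theta}$ for every $\ket{\psi_j} \in \Sin$.

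The main obstacle — really the only subtle point — is the cleanly-stated but easily-botched passage from the pairwise non-orthogonality relation to a genuine partition: one must be careful that "$A$ is a union of $\sim$-classes and $B$ is its complement" indeed yields $A \subseteq \spn(B)^\bot$, which requires that pairwise orthogonality of every $\ket{\psi_a}\in A$ with every $\ket{\psi_b}\in B$ implies $A$ lies in the orthogonal complement of $\spn(B)$; this is immediate since orthogonality to each spanning vector of $\spn(B)$ implies orthogonality to all linear combinations. A minor edge case worth a sentence: if $|\Sin| = 1$ the claim is trivial, and if some $\ket{\psi_j}$ is orthogonal to all others it forms its own singleton class, which is exactly the degenerate partition the OPR condition forbids — so no special handling beyond the contradiction argument is needed.
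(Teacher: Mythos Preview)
Your proof is correct and uses essentially the same idea as the paper: both arguments exploit \Cref{le:evnonortho} along chains of non-orthogonal inputs and invoke the OPR condition to rule out a nontrivial orthogonal split of $\Sin$. The paper packages this as an explicit iterative construction (growing a set $A_k$ one non-orthogonal element at a time and proving by induction that all elements share the phase), whereas you phrase it via the transitive closure of the non-orthogonality relation and a contradiction on the number of equivalence classes; these are two standard presentations of the same graph-connectivity argument.
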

\begin{proof}
For the proof of this lemma, we iteratively construct a subset $A_m \subseteq \Sin$ such that all $\ket{\psi_j} \in A_m$ share the eigenvalue $\braket{\psi_j | (U^\dagger\hypo \otimes I)|\psi_j} = e^{i\theta}$ by construction and show that $A_m = \Sin$ if $\Sin$ is OPR. We start by picking any $\ket{\psi_a} \in \Sin$ to define
\begin{equation}
    A_1 := \{\ket{\psi_a}\}.
\end{equation}
Any set $A_k$ is expanded to give $A_{k+1}$ as follows.
According to \Cref{def:nonortho}, for the partition $A_k \cup (\Sin \setminus A_k) = \Sin$ it holds that $A_k \not\subseteq \spn(\Sin \setminus A_k)^\bot$. 
Therefore, there exists $\ket{\psi} \in A_k$ and a vector $\ket{b} \in \spn(\Sin \setminus A_k)$ with $\braket{\psi | b} \neq 0$. 
Since any vector $\ket{b} \in \spn(\Sin \setminus A_k)$ can be expressed as a linear combination of vectors in $\Sin \setminus A_k$, this further implies that there is some $\ket{\psi_b} \in \Sin \setminus A_k$ with $\braket{\psi | \psi_b} \neq 0$.
The set $A_{k+1}$ is constructed by adding $\ket{\psi_b}$ to the set $A_k$: $A_{k+1} = A_k \cup \{\ket{\psi_b}\}$.

The procedure terminates with some set $A_m$ since the elements in $\Sin$ are finite, which implies that the elements in each $\Sin \setminus A_k$ are finite. 
Because at each step $A_k \not\subseteq \spn(\Sin \setminus A_k)$ by \Cref{def:nonortho}, there always is an element $\ket{\psi_b} \in \Sin \setminus A_k$ that can be added to obtain $A_{k+1}$ as long as $\Sin \setminus A_k \neq \emptyset$.
Therefore, at the end of this construction, $A_m = \Sin$.

Lastly, we prove by induction on the number of steps $k$ that $\braket{\psi_j | (U^\dagger\hypo \otimes I)|\psi_j} = e^{i\theta}$ for all $\ket{\psi_j} \in A_m$.

\noindent\emph{Base case:} $k=1$. For $A_1$ it trivially holds that all elements share the eigenvalue $e^{i\theta} = \braket{\psi_a | (U^\dagger\hypo \otimes I)|\psi_a}$. 

\noindent\emph{Induction hypothesis:} Let $k$ be an arbitrary integer $\geq 0$ and assume that for all $\ket{\psi_j} \in A_k$: $\braket{\psi_j | (U^\dagger\hypo \otimes I)|\psi_j} = e^{i\theta}$.

\noindent\emph{Induction step:} Consider the set $A_{k+1}$ according to the construction above. 
For the newly added state $\ket{\psi_b}$ there is $\ket{\psi} \in A_k$ with $\braket{\psi|\psi_b} \neq 0$. 
By the induction hypothesis it holds that $\braket{\psi | (U^\dagger\hypo \otimes I)|\psi} = e^{i\theta}$. 
Therefore, \Cref{le:evnonortho} shows $\braket{\psi_b | (U^\dagger\hypo \otimes I)|\psi_b} = \braket{\psi | (U^\dagger\hypo \otimes I)|\psi} = e^{i\theta}$.
\end{proof}

In summary, according to \Cref{le:evopr}, there is a global eigenvalue $e^{i\theta}$ for $(U^\dagger \hypo \otimes I)$ that is shared among all training inputs in an OPR set $\Sin$. 
\Cref{le:phaseform} shows that this eigenvalue is also the eigenvalue of all decomposed states $\ket{\xi_l} \in B(\SQX)$ with respect to $U^\dagger \hypo$.
Finally, this implies that the inequality in \cref{eq:tracexbound} is saturated, and the trace of $U^\dagger \hypo$ is maximal with respect to $\dSQX$.
Therefore, although it is not required that the training inputs in $\Sin$ are mutually non-orthogonal, \Cref{def:nonortho} still describes a stricter property than just the existence of some non-orthogonal training samples.

\subsection{Effect of Orthogonal Training Samples on the Expected Risk}
\label{subsec:orthogonal}
If \Cref{def:nonortho} is not satisfied, the set $\Sin$ can be partitioned into two orthogonal subsets. 
Although it still is possible that the phases match in this case, \Cref{le:evopr} does not apply to guarantee a matching phase for all samples after the application of $U^\dagger \hypo$.
However, although these phases might not match after training, their average deviation gives rise to a new bound on the expected risk.
To give an example, \Cref{fig:orthophases} shows the phases $e^{i \theta_j} = \braket{\psi_j | (U^\dagger \hypo \otimes I) | \psi_j}$ that are obtained by simulating QNN training with four pairwise orthogonal training samples.
The left plot in \Cref{fig:orthophases} shows the phases obtained using a hypothesis unitary with low risk.
They point in similar directions. In contrast, the phases obtained from a hypothesis unitary with high risk (the right plot in \Cref{fig:orthophases}) are distributed uniformly along the unit circle.

\begin{figure}[t]
    \centering
    \includegraphics[scale=1]{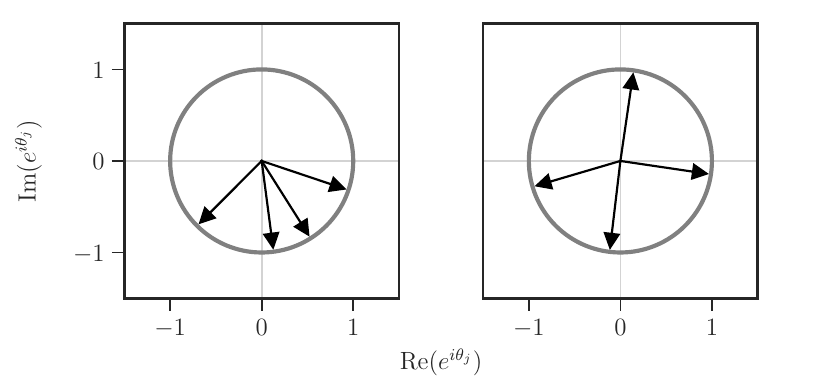}
    \caption{
        The eigenvalues $e^{i \theta_j}$ introduced by $U^\dagger \hypo$ after training with four orthogonal inputs for a low-risk hypothesis (left) and for a high-risk hypothesis (right). 
    }
    \label{fig:orthophases}
\end{figure}

Hence, if the risk is low, the phase angles $\theta_j$ are similar, and if the risk is high, they are distant in the orthogonal case.
Using this information, the average risk after training with pairwise orthogonal samples $\mathbb{E}_U \left[ \mathbb{E}_S^{\text{ortho}} \left[R_U(\hypo)\right]\right]$ is given.
Thereby, $\mathbb{E}_S^{\text{ortho}}$ describes the expectation value with respect to the set $\Tor \subset T_t$ defined as 
\begin{equation}
    \Tor := \{\Sin \;\mid\; \card(\Sin) = t, \forall \ket{\psi_i} \neq \ket{\psi_j} \in \Sin: \braket{\psi_i | \psi_j} = 0\}.\label{eq:def_tortho}
\end{equation}
The expectation value with respect to $\Tor$ is evaluated by calculating the expected angle between the phases of the individual training samples:
\begin{equation}
    \mathbb{E}_U \left[ \mathbb{E}_S^{\text{ortho}} \left[R_U(\hypo)\right]\right] \geq 1 - 
    \frac{\left(\sum_{j=1}^t r_j^2\right) + d + 1}{d(d+1)}\label{eq:qnfl_ortho}.
\end{equation}
The detailed justification for this formula is given in \Cref{app:orthobound}.

\subsection{Linear Independence in $\H_X$}
\label{subsec:linear_independence}
The set $\SQX$, together with its image under the application of the target unitary $U$, describes the action of $U$ on the space $\H_X$.
Linear independence with respect to the subspace $\H_X$ states that there should be no redundant information in $\SQX$.
This requirement follows from \cref{eq:ineq2} in the proof in \Cref{sec:mod_ent}, which gives an upper bound for the dimension $\dSQX$ of the subspace $\Hknown$ that is spanned by $\SQX$.
This dimension is maximal with respect to the number of training samples $t$ and the degree of entanglement $\Ar$ if the elements in $\SQX$ are linearly independent.
Furthermore, the linear independence of the training inputs $\Sin$ in $\H_X \otimes \H_R$ does not imply the linear independence of $\SQX$ in $\H_X$ as the example in \Cref{sec:cex_lihx} shows.
Therefore, \Cref{def:lihx} describes a more restricting property than the linear independence of the inputs $\Sin$ as the requirement for minimal risk after training.

\vspace{0.5em}
\noindent 
\fbox{%
    \parbox{\dimexpr\linewidth-2\fboxsep-2\fboxrule}{%
        \begin{definition}[Linear Independence in $\H_X$]
            \label{def:lihx}
            The set of training inputs $\Sin$ is \emph{linearly independent in $\H_X$} iff $\dim(\spn(\SQX)) =  \card(\SQX)$. 
        \end{definition}
    }%
}
\vspace{0.5em}

The requirement for \Cref{def:lihx} is highlighted by evaluating the dimension of the known subspace for individual training samples.
For this, we define the set
\begin{equation}
    S_{X,j} := \{\ket{\xi_{j,k}} \;\mid\; 1 \leq k \leq r_j\} \subseteq \SQX\label{eq:SXJset}
\end{equation}
containing all orthonormal basis states in $\H_X$ from the Schmidt decomposition of one input $\ket{\psi_j}$.
This set spans a subspace $\H_{S_{X,j}} := \spn(S_{X,j}) \subseteq \Hknown$ of the known subspace $\Hknown = \spn(\SQX)$.
For a single training input $\ket{\psi_j}$, the dimension of $\H_{S_{X,j}}$ is equal to the training input's Schmidt rank $r_j$.
However, for multiple inputs, the dimension of the known subspace $\dSQX = \dim(\Hknown)$ is not necessarily the sum of the dimensions of the individual subspaces $\H_{S_{X,j}}$.
Thus, for $\Sin = \{\ket{\psi_1},  \ket{\psi_2} \}$:
\begin{equation}
    \dim(\Hknown) = \dim(\spn(S_{X,1} \cup S_{X,2})) \leq \dim(\H_{S_{X,1}}) + \dim(\H_{S_{X,2}}).
\end{equation}
This inequality is saturated iff the inputs $ \{\ket{\psi_1},  \ket{\psi_2} \}$ are linearly independent in $\H_X$. 
The worst case $\H_{S_{X,2}} \subseteq \H_{S_{X,1}}$ implies that the second training input $\ket{\psi_2}$ does not increase $\dSQX$ at all and, thus, does not decrease the risk.

\subsection{Effect of Linearly Dependent Training Samples on the Expected Risk}
\label{subsec:linear_dependent}
In the following, we examine sets of training inputs that do not satisfy \cref{def:lihx}. 
In particular, we focus on training samples that describe a known subspace $\Hknown$ of minimal dimension $\dSQX$ and calculate a lower bound for the expected risk $\mathbb{E}_U \left[\mathbb{E}_S^{\text{ld}}\left[R_U(\hypo)\right]\right]$ in this case.
Since for every training sample $\ket{\psi_j}$, $\dim(\Hknown) \geq \dim(\H_{S_{X,j}}) = r_j$, the minimal possible dimension of $\Hknown$ is 
\begin{equation}
    \max_{1\leq j\leq t}\{\dim(\H_{S_{X,j}})\} = \max_{1\leq j\leq t}\{r_j\} = r_{\text{max}}.
\end{equation}
We refer to an arbitrary state in $\Sin$ of Schmidt rank $r_{\text{max}}$ as $\ket{\psi_{\text{max}}}$.
Since we assume that $\dim(\Hknown)$ is minimal: $\H_{S_{X,j}} \subseteq \H_{S_{X,\text{max}}}$ for all inputs $\ket{\psi_j} \in \Sin$.
This further implies that each $S_{X,j} \subset \H_{S_{X,\text{max}}}$, i.e., each
$\xijk \in S_{X,j}$ can be expressed as a linear combination of states in $S_{X,\text{max}}$.
Therefore, to examine the expected risk for sets of training inputs of minimal dimension $\dim(\Hknown)$, we require for all training samples that $\ket{\psi_j} \in \H_{S_{X,\text{max}}} \otimes \H_R$.
Thus, the set of possible sets of training inputs is 
\begin{equation}
    \Tld := \left\{ 
    \Sin = \left\{ \ket{\psi_{\text{max}}},
    \ket{\psi_2}, \dots, \ket{\psi_t} \right\}
    \right\} \subseteq T_t,
\end{equation}
with $\ket{\psi_{\text{max}}} \in \H_X \otimes \H_R$ and $\ket{\psi_j} \in \H_{S_{X,\text{max}}} \otimes \H_R$ for all $2 \leq j \leq t$.

\sloppy To give the expected risk for training samples of this structure, we calculate $\mathbb{E}_U\left[\mathbb{E}_S^{\text{ld}}\left[\left|\Tr[U^\dagger\hypo]\right|^2\right]\right]$ as in \Cref{subsec:exp_sqr_abs_tr} and use $\dSQX = r_{\text{max}}$ in \cref{eq:uvsintegral} to show 
\begin{equation}
    \mathbb{E}_U\left[\mathbb{E}_S^{\text{ld}}\left[\left|\Tr[U^\dagger\hypo]\right|^2\right]\right] = \int_{\Sin\in \Tld}\int_{U\in\mathcal{U}(d)} \left|\Tr[U^\dagger\hypo]\right|^2 \leq r_{\text{max}}^2 +1.\label{eq:nlihx_tracebound}
\end{equation}
We specify the measure space that is required to evaluate this integral in \Cref{app:measures}.
Applying \cref{eq:nlihx_tracebound} to \cref{eq:risk_trace} gives the lower bound
\begin{align}
    \mathbb{E}_U \left[\mathbb{E}_S^{\text{ld}}\left[R_U(\hypo)\right]\right] 
    &= 1 - \frac{d + \mathbb{E}_U \left[\mathbb{E}_S^{\text{ld}}\left[\left| \Tr\left[ U^\dagger \hypo\right]\right|^2\right]\right]}{d(d+1)}\\
    &\geq 
    1 - \frac{r_{\text{max}}^2 + d + 1}{d(d+1)}.\label{eq:qnfl_nlihx}
\end{align}

Since this expression is independent of the size $t$ of the set of training samples, it follows that training a QNN with multiple training samples of minimal dimension $\dim(\Hknown) = r_{\text{max}}$ does not provide more information about the target unitary than training with just on input of Schmidt rank $r_{\text{max}}$ does.

\section{Experiment}\label{sec:design}
To evaluate the results from \Cref{sec:mod_ent} and \Cref{sec:data_structure}, we perform three experiments using different compositions of the training samples $\SQ$.
The experiment results and the implementations that were used for the experiments are available online~\cite{Mandl2023datarepo}.
In all three experiments, we train QNNs using PQCs for randomly generated unitary target operators.
We use SciPy~\cite{Virtanen2020_scipy} to sample 6-qubit unitary matrices randomly~\cite{Mezzadri2007}. 
This way, the target unitary $U$ is fully known in our experiments. 
This allows calculating the expected outputs in a set of training samples for all experiments by computing $\ket{\phi_j} = (U \otimes I)\ket{\psi_j}$ for all given inputs $\ket{\psi_j} \in \Sin$.

For the experiments, we simulate a PQC $\hypo$ using Pytorch~\cite{Paszke2019}. 
\Cref{fig:qnn_arch} shows the process of supervised learning with entangled training data.
During training, we prepare the state $\ket{\psi_j} \in \Sin$ as input. 
This vector is an element of the combined space $\H_X \otimes \H_R$. 
However, the PQC is applied on the subspace $\H_X$ only. 
This results in the output $\ket{\widetilde{\phi_j}} = (\hypo \otimes I) \ket{\psi_j}$, which is extracted from the simulator.
By calculating the fidelity $|\braket{\phi_j|\widetilde{\phi_j}}|^2$ using $\ket{\widetilde{\phi_j}}$ and the pre-computed output $\ket{\phi_j}$, and repeating this process for each training input, the loss function is calculated:
\begin{equation}
L(\hypo) = 1 - \frac{1}{t} \sum_{k=1}^t \left|\Braket{\phi_j|\widetilde{\phi_j}}\right|^2.
\end{equation}
The only assumption of the QNFL theorem regarding the training process is that the QNN is perfectly trained (see \Cref{subsec:perfect_training}).
This means that the exact structure of the used QNN can be chosen freely as long as the loss after training the QNN is minimal.
To achieve minimal loss, 
we use a QNN ansatz architecture inspired by the circuits in~\cite{Sim2019}.

The schematic structure of the ansatz is shown in the box labeled $\hypo$ in \Cref{fig:qnn_arch}.
I.e., we use layers composed of parameterized universal single-qubit rotations~\cite{Qiskit_U3} ($\mathit{U3}$), followed by a closed loop of 2-qubit interactions ($\mathit{CNOT}$). 
Using $400$ layers of this structure, the ansatz consistently reached a final loss of $L(\hypo)<10^{-5}$ in our preliminary tests.

\begin{figure}[t]
    \centering
    \includegraphics[scale=0.90]{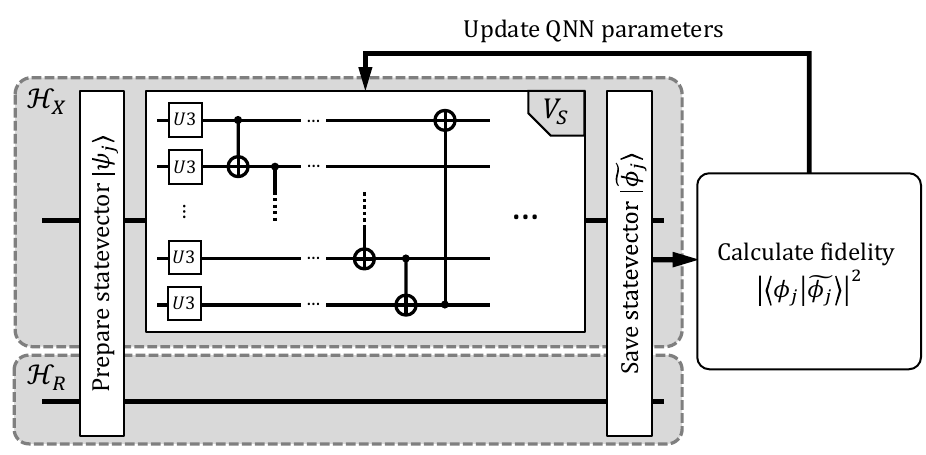}
    \caption{The setup for training the QNN $\hypo$ using entangled training samples. 
    The input state $\ket{\psi_j}$ is prepared and measured as an element of the combined space $\H_X \otimes \H_R$. However, the QNN is applied to the subspace $\H_X$ only.}
    \label{fig:qnn_arch}
\end{figure}

Using the classically simulated quantum circuit, it is possible to fully extract the learned unitary $\hypo$ after the training is complete. 
We use the extracted matrix representing $\hypo$ to calculate the risk after training with respect to the unitary $U$ according to \cref{eq:risk_trace}.
The input states are generated with the structure needed to evaluate the analytical results from \Cref{sec:mod_ent} and \Cref{sec:data_structure}. 
In the following, the most important properties of the inputs are given. 
For further details on how these states are created, refer to \Cref{app:trainingdata}.

\subsection{Experiment 1: Varying Schmidt Rank}\label{subsec:exp1}
The first experiment evaluates the lower bound for the expected risk for training samples with varying Schmidt rank (\cref{eq:qnfl_average}). 
We evaluate sets of training samples of average Schmidt rank $\Ar \in \{2^0,2^1,2^2,2^6\}$ for $t \in \{2^0,2^1,2^2,\dots,2^6\}$ elements in the training dataset $S$.
For each combination of size and average rank, 200 combinations of randomly generated target operators and sets of training samples are trained.

\begin{figure}[t]
    \centering
    \includegraphics[scale=1]{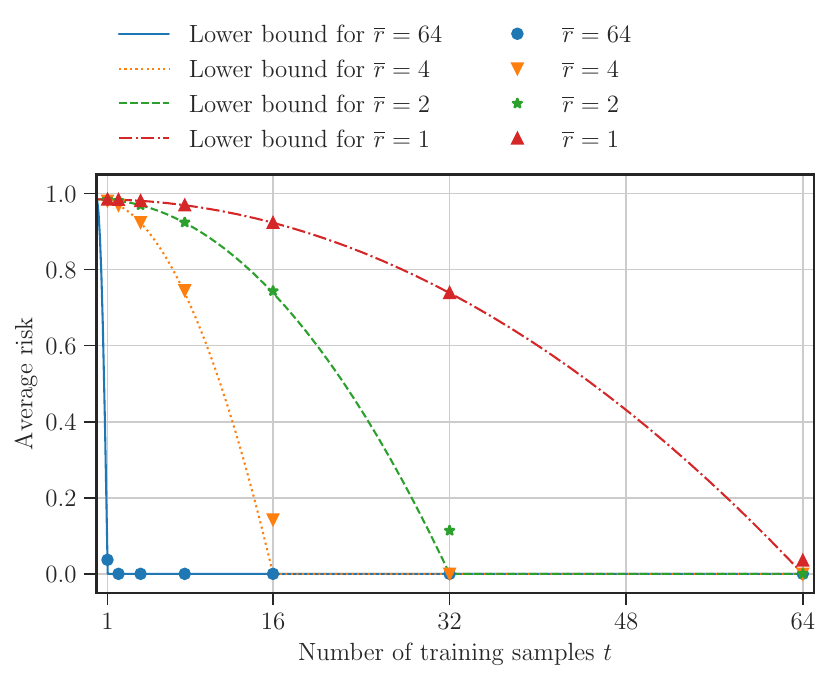}
    \caption{The average risk after training a 6-qubit QNN for 
    randomly generated target operators. 
    The markers show the experimentally measured average risk for each pair of Schmidt rank and training set size. 
    The lower bounds according to \cref{eq:qnfl_average} for each $\Ar$ are shown as lines.
    }
    \label{fig:avgplot}
\end{figure}

\Cref{fig:avgplot} shows the average risk after training QNNs for randomly generated target operators for a varying number of training samples $t$. 
The lines indicate the lower bounds according to \cref{eq:qnfl_average} for different average degrees of entanglement.
These lower bounds show that increasing the number of training samples generally decreases the lower bound for the expected risk.
However, it can be seen that higher degrees of entanglement result in a steeper reduction of the lower bound for the expected risk with respect to the number of training samples.
For a maximal degree of entanglement (i.e., $\overline{r}=64$), only one training sample is required to reach a risk of zero.
For each combination of average risk $\Ar$ and the number of training samples $t$, a point marks the measured average risk after training in our experiments in \Cref{fig:avgplot}. 
The individual simulated average risks for a varying Schmidt rank closely match the lower bound for all evaluated points.
This confirms the extension of the QNFL theorem in \Cref{sec:mod_ent} that shows that the risk is also reduced if $r_j$ is not fixed.

\begin{figure}[t]
    \centering
    \includegraphics[scale=1]{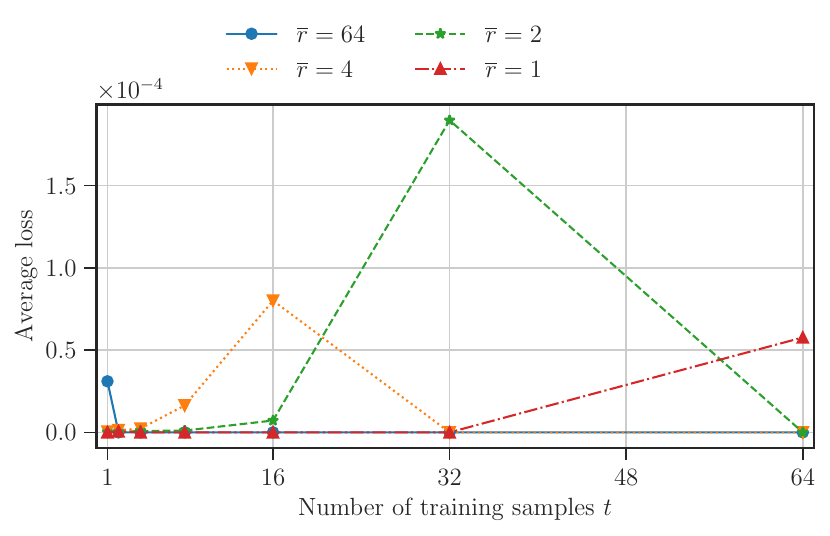}
    \caption{The average losses at the end of training in the experiment described in \Cref{subsec:exp1}.
    }
    \label{fig:avglossplot}
\end{figure}

Some simulated risks, however, deviate from the minimally obtainable risk (e.g., the risk for $\Ar=2$ and $t=32$).
Since we train each instance with the same amount of iterations in the optimization loop, one possibility for this deviation in the measured risk is an increased training complexity for the instances with the highest measured average risk.
Therefore, in \Cref{fig:avglossplot}, we examine the training complexity by plotting the average loss at the end of training during this experiment. 
This plot shows that the outliers in the training loss coincide with the points where the risk deviates the most from the lower bound in \Cref{fig:avgplot}. 
This results in imperfectly trained QNNs and an increased risk in \Cref{fig:avglossplot}. 
Furthermore, these outliers are exactly at the points where the datasets with a varying Schmidt rank first approach zero risk ($\Ar t = d$).
Whether this is the reason for the increased complexity of the training process is, however, not clear and has to be examined further.

\subsection{Experiment 2: Orthogonal Training Samples}\label{subsec:exp2}
The second experiment evaluates the lower bound for training QNNs using pairwise orthogonal training samples (\cref{eq:qnfl_ortho}). Therefore, we generate training samples that are:
(i)~of fixed Schmidt rank,
(ii)~pairwise orthogonal, i.e., do not satisfy \Cref{def:nonortho}, and
(iii)~linearly independent in $\H_X$, i.e., satisfy \Cref{def:lihx}.
The sets of training samples are of size $t \in \{2^0,\dots,2^6\}$.
The fixed Schmidt rank of $r=d/t$ ensures that $r\cdot t=d$, which implies that the lower bound for the expected risk according to the main theorem is zero.
However, as described in \Cref{sec:data_structure}, this is not the case for orthogonal training samples or training samples that are linearly dependent in $\H_X$.

\begin{figure}[t]
    \centering
    \includegraphics[scale=1,trim=0cm 0cm 0cm 0.3cm]{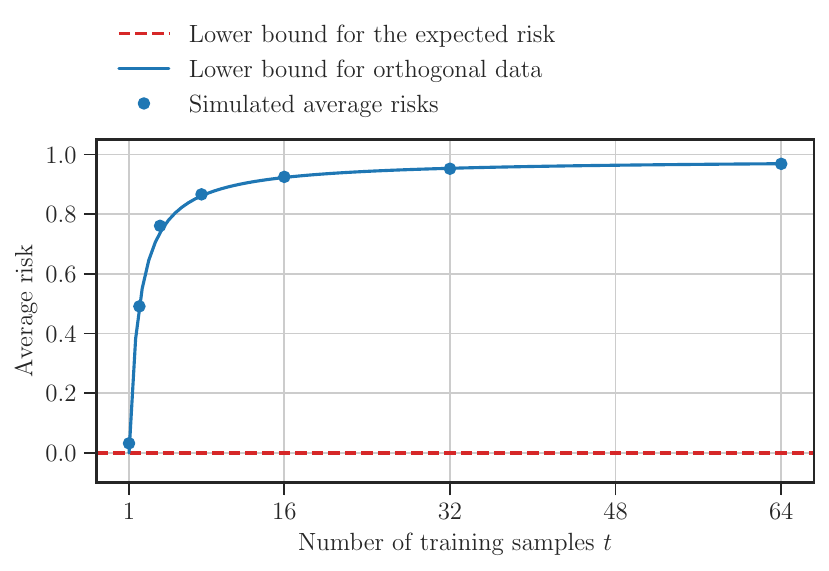}
    \caption{The average risk after training a 6-qubit QNN for randomly generated target operators using orthogonal training samples.
    For each number of training pairs $t$, the Schmidt rank is chosen as $r\cdot t=d$. 
    The lower bound for the risk for this configuration is shown as 
    a dashed line. The expected risk according to \cref{eq:qnfl_ortho} is shown as a solid line.}
    \label{fig:orthoplot}
\end{figure}

\Cref{fig:orthoplot} shows the average risk after training QNNs to approximate randomly generated unitary transformations using orthogonal training samples. 
The dashed line indicates the general lower bound for the risk according to the QNFL theorem (see \cref{eq:qnfl}) for $r=d/t$, which is close to zero in all cases.
The solid line shows the proposed lower bound for the risk for orthogonal training states given in \cref{eq:qnfl_ortho}.
For a fixed Schmidt rank $r$, this lower bound is given by 
\begin{equation}
    \mathbb{E}_U \left[\mathbb{E}_S^{\text{ortho}} \left[R_U(\hypo)\right]\right] \geq 1 - 
    \frac{r^2t + d + 1}{d(d+1)}.
\end{equation}
The markers show the measured average risks after training.
They closely follow the solid line, which confirms our lower bound for the expected risk when using pairwise orthogonal training samples.

\subsection{Experiment 3: Linearly Dependent Training Samples}\label{subsec:exp3}
The third experiment evaluates the lower bound for training QNNs using linearly 
dependent samples (\cref{eq:qnfl_nlihx}).
Therefore, we generate training samples that are
(i)~of fixed Schmidt rank,
(ii)~OPR, i.e., satisfy \Cref{def:nonortho}, and
(iii)~linearly dependent in $\H_X$, i.e., do not satisfy \Cref{def:lihx}.
To investigate the lower bound proposed in \Cref{subsec:linear_dependent}, we use training samples of the smallest possible  dimension $\dSQX$, as described in \Cref{app:trainingdata}.
As in the second experiment, we generate sets of training samples of size $t \in \{2^0,\dots,2^6\}$ with Schmidt rank $r=d/t$.

\begin{figure}[t]
    \centering
    \includegraphics[scale=1,trim=0cm 0cm 0cm 0.3cm]{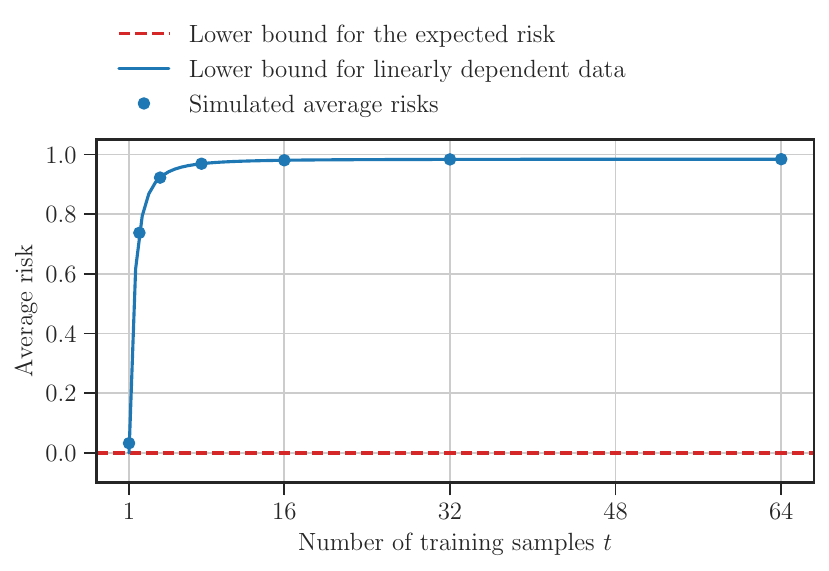}
    \caption{The average risk after training a 6-qubit QNN for randomly generated target operators using linearly dependent training samples according to \Cref{def:lihx}.
    For each number of training samples $t$, the Schmidt rank is chosen as $r\cdot t=d$. 
    The lower bound for the risk for this configuration is shown as a dashed line, and the expected risk according to \cref{eq:qnfl_nlihx} is shown as a solid line.}
    \label{fig:nlihxplot}
\end{figure}

\Cref{fig:nlihxplot} shows the average risk after training QNNs to approximate randomly generated operators using these linearly dependent training samples. 
As in the second experiment, the general lower bound for the risk (\cref{eq:qnfl}) is shown as a dashed line and the proposed lower bound for the risk for linearly dependent training states (\cref{eq:qnfl_nlihx}) is shown as a solid line.
For a fixed Schmidt rank $r$, this lower bound is  
\begin{equation}
    \mathbb{E}_U \left[\mathbb{E}_S^{\text{ld}}\left[R_U(\hypo)\right]\right]\geq 1 - 
    \frac{r^2 + d + 1}{d(d+1)}.
\end{equation}
The markers show that the measured average risks after training closely follow the solid line, which confirms our lower bound for the expected risk for training samples that are linearly dependent in $\H_X$.

\section{Related Work}
\label{sec:related_work}

The reformulated QNFL theorem by Sharma~et~al.~\cite{Sharma2022} builds the fundamentals for our work.
We extend their work on training with entangled training data by also examining different structures of the training samples for training for arbitrary unitary target operators.
Although their work implies that an exponential amount of unentangled training data is required for a complete reproduction of the target unitary for arbitrary problem matrices, this is not the case for restricted problem matrices. 
For example, Caro~et~al.~\cite{Caro2022} show that the number of required samples can be significantly reduced for operators that can be represented by polynomial depth quantum circuits.
Specifically, they study the generalization error of QML models and show that efficiently implementable unitary transformations can also be approximated with a tractable amount of training data~\cite{Caro2022}.
In contrast to their work, we take the setting of arbitrary unitary target operators into account.

In the area of QNNs, different works investigate the structure of the training samples.
For example, Yu~et~al.~\cite{Yu2022} study the required training sample structure for unentangled data.
One of their results is an upper bound on the risk after training and the examination of the worst-case risk when training with orthogonal unentangled states.
Furthermore, this work shows that using mixed states instead of pure states can significantly reduce the number of required training samples in QNN training~\cite{Yu2022}.
In contrast to their work, we make use of entangled pure states.

As an alternative to the generation of training data in our experiments, also existing training datasets for QNNs can be used.
For example, Schatzki~et~al.~\cite{Schatzki2021} introduce QML benchmarking datasets that are specifically composed of entangled training samples.
They emphasize the need for quantum data for QNN training as the process of embedding the classical data can be detrimental to the advantage that was gained by the quantum computer~\cite{Schatzki2021}.
However, to the best of our knowledge, none of the datasets available provide the specific properties investigated in our experiments.

\section{Conclusion}\label{sec:conc}
Employing entanglement in QNN training presents a unique opportunity for QML\,--\,when averaged over all possible unitary target operators, the performance of a QNN can be improved not only by increasing the number of training samples but also by increasing the degree of entanglement. 
However, the training samples need to be of a particular structure to benefit from entanglement, yet full control over the exact structure of the training samples is not always feasible.

This paper shows that control over the exact degree of entanglement is not required.
Since the average Schmidt rank of the input states is central to decreasing the expected risk after training, only obtaining a subset of the training samples as entangled quantum states is already beneficial.
This opens up further opportunities.
For example, in the area of \emph{quantum compiling} where specific algorithms are used to learn quantum circuits reproducing other target transformations while having direct access to this target transformation~\cite{Khatri2019,Volkoff2021}, entangled samples of varying degrees can be used.

Although entanglement promises to be a powerful tool to improve ML tasks on quantum computers, our work shows that care has to be taken on how entangled training samples for QNN training are composed. 
We specified the exact structure of the training samples required to minimize the risk and derived new estimates for the risk if this structure is not guaranteed.
On the one hand, training inputs must be OPR to avoid the QNN learning distinct unitary transformations for each orthogonal subspace of the input space.
This constraint enforces similarities among individual inputs and prevents the QNN from learning different phases for specific inputs, as it would affect the other non-orthogonal inputs as well.
On the other hand, training samples must be linearly independent in $\H_X$ to minimize the number of required training samples.
This is true because training inputs should explore the full space $\H_X$ that the target unitary acts upon, and adding states that can be decomposed into linear combinations of existing inputs does not provide new information about the target unitary.

In future work, we plan to explore how entanglement can benefit QNNs in practice, including the possibility of obtaining entangled training samples. 
For classical data, the encoding strategies in quantum computers, as well as their orthogonality and linear independence, must be evaluated to determine which encodings satisfy the proposed training sample properties. 
Additionally, these properties can also be investigated for datasets without a reference system, potentially leading to similar observations when training QNNs without entanglement.
Furthermore, we observed an increase in loss at the points where the risk approached zero, which can be mitigated by increasing training time. 
However, it is unclear if this increase in complexity is due to the increased degree of entanglement that results in improved risk. 
Therefore, further investigation is necessary.

\section*{Acknowledgements}
The authors would like to thank Victoria Mangold, Benedikt Riegel and Felix Winterhalter for their valuable input and assistance in the experimental verification of the results.
This work was partially funded by the BMWK projects \textit{PlanQK} (01MK20005N) and \textit{EniQmA} (01MQ22007B).

\bibliographystyle{quantum}
\bibliography{lib,MasterBibliography}

\section*{Appendix}
\appendix

\appendix

\section{Proof for \Cref{le:phaseform}}\label{app:phaseformproof}
\begin{proof}[\unskip\nopunct]
The proof of \Cref{le:phaseform} follows the proof given in the appendix of~\cite{Sharma2022}. 
The lemma states that the orthonormal basis (ONB) states $\xijk \in \H_X$ are eigenvectors of $U^\dagger \hypo$ with eigenvalue $e^{i\theta_j} = \braket{\psi_j | (U^\dagger \hypo \otimes I) | \psi_j}$.
The states $\xijk$ are obtained from the Schmidt decomposition of an input state $\ket{\psi_j} \in \Sin$:
\begin{equation}
        \ket{\psi_j} = \sum_{k=1}^{r_j} \sqrt{c_{j,k}} \xijk_X \zejk_R,\label{eq:schmidtappend}
\end{equation}
with $c_{j,k} \in \mathbb{R}_{> 0}$ and $\sum_{k=1}^{r_j} \left|\sqrt{c_{j,k}}\right|^2 = \sum_{k=1}^{r_j} c_{j,k} = 1$.
The QNFL theorem assumes perfect training of the hypothesis circuit (\Cref{subsec:perfect_training}), which according to \cref{eq:perfect_training_sample_eigenvalue} implies that each $\ket{\psi_j}$ is an eigenvector of $U^\dagger \hypo \otimes I$ with eigenvalue $e^{i\theta_j} = \braket{\psi_j | (U^\dagger \hypo \otimes I) | \psi_j}:$
\begin{equation}
    (U^\dagger \hypo \otimes I)\ket{\psi_j} = 
    e^{i \theta_j}\ket{\psi_j}.
\end{equation}
We rewrite the eigenvalue $e^{i\theta_j}$ of the combined operator $(U^\dagger\hypo \otimes I)$ using an ONB $\{\ket{a}\}$ for $\H_X$ and an ONB $\{\ket{b}\}$ for $\H_R$:
\begin{align}
    e^{i\theta_j} &= \braket{\psi_j | (U^\dagger \hypo \otimes I) | \psi_j}\\
    &= \Tr\left[ (U^\dagger \hypo \otimes I) \ketbra{\psi_j}{\psi_j} \right]\\
    &= \sum_{\ket{a}} \sum_{\ket{b}} \left( \bra{a}_X \otimes \bra{b}_R \right)(U^\dagger\hypo \otimes I)\ketbra{\psi_j}{\psi_j}\left( \ket{a}_X \otimes \ket{b}_R \right)\\
    &= \sum_{\ket{a}} \sum_{\ket{b}} \bra{a}_X\left(I \otimes \bra{b}_R\right)(U^\dagger\hypo \otimes I)\ketbra{\psi_j}{\psi_j}\left(I \otimes \ket{b}_R\right)\ket{a}_X\\
    &= \sum_{\ket{a}} \sum_{\ket{b}} \bra{a}_X\left(U^\dagger\hypo \otimes \bra{b}_R\right)\ketbra{\psi_j}{\psi_j}\left(I \otimes \ket{b}_R\right)\ket{a}_X\\
    &= \sum_{\ket{a}} \bra{a}_X U^\dagger\hypo\left(\sum_{\ket{b}} \left(I \otimes \bra{b}_R\right)\ketbra{\psi_j}{\psi_j}\left(I \otimes \ket{b}_R\right)\right)\ket{a}_X\\
    &= \sum_{\ket{a}} \bra{a}_X U^\dagger\hypo \Tr_R\left[\ketbra{\psi_j}{\psi_j}\right]\ket{a}_X\\
    &= \Tr_X\left[U^\dagger\hypo \Tr_R\left[\ketbra{\psi_j}{\psi_j}\right]\right]\label{eq:trace_inside}
\end{align}
We use the Gram-Schmidt procedure~\cite{Liesen2015} to obtain an ONB for the reference system that contains all $\zejk$ from \cref{eq:schmidtappend} and trace out the reference system using this ONB:
\begin{align}
\Tr_R\left[ \ketbra{\psi_j}{\psi_j}\right] &=
\sum_{\zej{i}} \left(I \otimes \bra{\zeta_{j,i}}_R\right)
\ketbra{\psi_j}{\psi_j}
\left(I \otimes \zej{i}_R\right)\\
\mathclap{\qquad\qquad\qquad\qquad\qquad\qquad\qquad\qquad\qquad
= \sum_{\zej{i}} \left(I \otimes \bra{\zeta_{j,i}}_R\right)
\left[
\sum_{k_1 = 1, k_2=1}^{r_j} \sqrt{c_{j,k_1} c_{j,k_2}}
\ketbra{\xi_{j,k_1}}{\xi_{j,k_2}} \otimes \ketbra{\zeta_{j,k_1}}{\zeta_{j,k_2}}
\right]
\left(I \otimes \zej{i}_R\right)
}\\
&= \sum_{\zej{i}} \sum_{k_1 = 1, k_2=1}^{r_j} 
 \sqrt{c_{j,k_1}c_{j,k_2}}
\ketbra{\xi_{j,k_1}}{\xi_{j,k_2}} \cdot \braket{\zeta_{j,i} | \zeta_{j,k_1}}\cdot\braket{\zeta_{j,k_2}|\zeta_{j,i}}
\\
&= \sum_{k=1}^{r_j} c_{j,k} \ketbra{\xi_{j,k}}{\xi_{j,k}}.\label{eq:diracdelta_in_trace_r}
\end{align}
The equality in \cref{eq:diracdelta_in_trace_r} follows since $\braket{\zeta_{j,i} | \zeta_{j,k_1}}\braket{\zeta_{j,k_2}|\zeta_{j,i}} = 1$ if and only if $i = k_1 = k_2$ and it is zero otherwise.
By substituting \cref{eq:diracdelta_in_trace_r} into \cref{eq:trace_inside} and by using the linearity of the trace:
\begin{align}
    e^{i\theta_j} &= \Tr_X \left[U^\dagger \hypo \sum_{k=1}^{r_j} c_{j,k} \ketbra{\xi_{j,k}}{\xi_{j,k}}\right]\\
    &= \sum_{k=1}^{r_j} c_{j,k}
    \Tr_X\left[U^\dagger \hypo \ketbra{\xi_{j,k}}{\xi_{j,k}}\right]\\
    &= \sum_{k=1}^{r_j} c_{j,k} \beta_{j,k}\label{eq:betaform},
\end{align}
where 
\begin{equation}
    \beta_{j,k} := \Tr_X\left[U^\dagger \hypo \ketbra{\xi_{j,k}}{\xi_{j,k}}\right] = \braket{\xi_{j,k} | U^\dagger \hypo | \xi_{j,k}}\label{eq:betadef}
\end{equation} 
as a shorthand.
Since $\beta_{j,k}$ is a complex number, it can be written in exponential form, i.e., $\beta_{j,k} = \alpha_{j,k}e^{i\theta_{j,k}}$ with $\alpha_{j,k} \in \mathbb{R}_{\geq 0}$ and $\theta_{j,k} \in (-\pi, \pi]$.
Furthermore, since $|\beta_{j,k}|^2 = \left|\braket{\xi_{j,k} | U^\dagger \hypo | \xi_{j,k}}\right|^2$ is the fidelity of two quantum states $\xijk$ and $U^\dagger V_S \xijk$, it holds that $0 \leq |\beta_{j,k}|^2 \leq 1$ and thus $0 \leq |\beta_{j,k}| = \alpha_{j,k} \leq 1$.
Under these restrictions, the triangle inequality is applied to \cref{eq:betaform}
\begin{align}
    \left| \sum_{k=1}^{r_j} c_{j,k} \alpha_{j,k} e^{i\theta_{j,k}} \right|
    = \left| e^{i\theta_j}\right| = 1 
    &\leq \sum_{k=1}^{r_j} \left| c_{j,k} \alpha_{j,k} e^{i\theta_{j,k}} \right|\\
    &= \sum_{k=1}^{r_j} \left| c_{j,k} \right| \alpha_{j,k} \cdot 1 \\
    &= \sum_{k=1}^{r_j} c_{j,k} \alpha_{j,k}.\label{eq:firsttriangle_last_line}
\end{align}
Herein, \cref{eq:firsttriangle_last_line} follows since $c_{j,k}$ is a coefficient in the Schmidt decomposition (see \cref{eq:schmidtappend}) with $c_{j,k} \geq 0$.
Since $\sum_{k=1}^{r_j} c_{j,k} = 1$ and $\alpha_{j,k} \leq 1$, the existence of some $\alpha_{j,k} < 1$ would imply that $\sum_{k=1}^{r_j} c_{j,k} \alpha_{j,k} < 1$, which contradicts \cref{eq:firsttriangle_last_line}.
Therefore, $\alpha_{j,k} = 1$ for all $1 \leq k \leq r_j$ and \cref{eq:betaform} reduces to
\begin{equation}
    e^{i\theta_j} = \sum_{k=1}^{r_j} c_{j,k} e^{i \theta_{j,k}}.
\end{equation}
Furthermore, since 
\begin{equation}
    1 = \left|e^{i\theta_j}\right| 
    = \left| \sum_{k=1}^{r_j} c_{j,k} e^{i\theta_{j,k}}\right|
    = \sum_{k=1}^{r_j} \left|c_{j,k} e^{i\theta_{j,k}}\right|
    = \sum_{k=1}^{r_j} c_{j,k} = 1,
\end{equation}
the triangle inequality actually saturates for this sum. 
Therefore, all arguments $\theta_{j,k}$ must match (see Section 1.5 in~\cite{ahlfors1953}), which implies $\beta_{j,k} = e^{i \theta_{j,k}} = e^{i \theta_j}$.
Using \cref{eq:betadef}, this shows:
\begin{equation}
    \Tr_X\left[U^\dagger \hypo 
    \ketbra{\xi_{j,k}}{\xi_{j,k}}\right]
    = \braket{\xi_{j,k} | U^\dagger \hypo | \xi_{j,k}}
    = e^{i\theta_j}.\label{eq:xiphaseresult}
\end{equation}
\cref{eq:xiphaseresult} allows to apply \Cref{le:evfrominnerprod} for the states $\ket{a} = \xijk$ and $\ket{b} = U^\dagger\hypo \xijk$ to show that $\xijk$ is an eigenvector of $U^\dagger \hypo$ with
\begin{equation}
    U^\dagger \hypo \xijk = e^{i\theta_j} \xijk
\end{equation}
as required.
\end{proof}

\section{Measurability of Training Samples}\label{app:measures}
In this work, integrals of the form 
\begin{equation}
    \int_{U\in\mathcal{U}(d)}
    \int_{\Sin \in T} f(U,\Sin)
    \diff \mu_{T}(\Sin)
    \diff \mu_{\mathcal{U}(d)}(U),\label{eq:int_example}
\end{equation}
are evaluated, where $\mathcal{U}(d)$ is the set of unitary matrices of dimension $d$ and $T \in \{T_t, \Tor, \Tld\}$ are sets of collections of quantum states that are subject to various restrictions as elaborated in \Cref{sec:mod_ent} and \Cref{sec:data_structure}. 
To be able to apply Fubini's theorem to change the order of integration, we specify the measure spaces associated with $T$ in this section. 

\subsection{Measure on Arbitrary Training Inputs $(T_t)$}
Quantum states of dimension $m$ are elements of the complex projective space $\mathbb{C}\mathrm{P}^{m-1}$, which is the space of equivalence classes containing non-zero vectors in $\mathbb{C}^m$ that only differ by a factor $\lambda \in \mathbb{C}\setminus\{0\}$~\cite{Bengtsson2017}.
This reflects that in quantum computing any state $\ket{\psi}$ is equivalent to the state $\lambda \ket{\psi}$.
Since $\lambda \neq 0$ can vary arbitrarily, an element in $\mathbb{C}\mathrm{P}^{m-1}$ is regarded as the collection of vectors $\{ \lambda \ket{\psi} \;\mid\; \lambda \in \mathbb{C}\setminus\{0\}\}$ along a ray through the origin.
Using this interpretation, a metric~(see Section 5.3 in~\cite{Bengtsson2017}) on $\mathbb{C}\mathrm{P}^{m-1}$ is derived from the angle between the rays for (possibly not normalized) quantum states $\ket{\psi}$ and $\ket{\phi}$ as
\begin{equation}
    \gamma\left( \ket{\psi}, \ket{\phi}\right) =
    \arccos \sqrt{\frac{\left|\braket{\phi|\psi}\right|^2}{\braket{\psi|\psi} \braket{\phi|\phi}}}.\label{eq:def_fubinistudy}
\end{equation}
This metric is referred to as the Fubini-Study metric~\cite{Bengtsson2017} and it will be used in the following to give the measure space of possible training inputs $T_t$ as defined in \cref{eq:def_tt}.
Although states do not have to be normalized according to the definition of $\mathbb{C}\mathrm{P}^{m-1}$, we assume in the following that $\ket{\psi}$ refers to a normalized state with $\braket{\psi|\psi} = 1$, unless stated otherwise.

As usual, the metric $\gamma$ induces a topology $\mathfrak{O}$ on $\mathbb{C}\mathrm{P}^{m-1}$.
The Borel-$\sigma$-algebra $\mathfrak{B}\left(\mathbb{C}\mathrm{P}^{m-1}\right) = \sigma\left(\mathfrak{O}\right)$, generated by the open subsets $\mathfrak{O}$, then serves as the $\sigma$-algebra for the measure space for $\mathbb{C}\mathrm{P}^{m-1}$ (see also Section~7.2 in~\cite{Cohn2015}).
Furthermore, there is a finite normalized measure on $\mathbb{C}\mathrm{P}^{m-1}$, which is referred to as the Fubini-Study measure~\cite{Bengtsson2017}.
Since $\mathbb{C}\mathrm{P}^{m-1} = T_1$, we denote this measure by $\mu_{T_1}$ and the $\sigma$-algebra as $\mathfrak{B}(T_1)$, giving the measure space $\left(T_1, \mathfrak{B}(T_1), \mu_{T_1} \right)$ for the sets of training inputs of size $t=1$.
The measure on the space of larger collections of training inputs ($t \geq 1$) is given by the product measure 
\begin{equation}
    \mu_{T_t}(X_1 \times \cdots \times X_t) = \prod_{j=1}^t \mu_{T_1}(X_j),
\end{equation}
with $X_j \in \mathfrak{B}(T_1)$ for $1 \leq j \leq t$. 
Therefore, we evaluate integrals of the form in \cref{eq:int_example} using the measure space $\left( T_t = \left(\mathbb{C}\mathrm{P}^{m-1}\right)^t, \left(\mathfrak{B}(T_1)\right)^t, \mu_{T_t} \right)$.
Note that, although $T_t$ was originally defined as a set containing sets of quantum states (see~\cref{eq:def_tt}), the definition as tuples of states using the product measure is equivalent for the purposes of this work.

\subsection{Measurability of Pairwise Orthogonal Training Samples $(\Tor)$}
In the following, we use the Fubini-Study metric $\gamma$ to define a function $g$ that distinguishes elements in $\Tor$ (as defined in~\cref{eq:def_tortho}) from elements in $T_t \setminus \Tor$ and use this function to show that $\Tor \in \mathfrak{B}(T_t)$ and thus $\Tor$ is measurable.

Since $\gamma$ is a metric, it is non-negative.
Furthermore, according to \cref{eq:def_fubinistudy}, $\gamma(\ket{\psi},\ket{\phi}) = \arccos(x)$ with
\begin{equation}
    x = \sqrt{\frac{\left|\braket{\phi|\psi}\right|^2}{\braket{\psi|\psi} \braket{\phi|\phi}}}
    \leq 
    \sqrt{\frac{\braket{\psi|\psi} \braket{\phi|\phi}}{\braket{\psi|\psi} \braket{\phi|\phi}}} = 1,
\end{equation}
by using the Cauchy-Schwarz inequality~\cite{nielsen2002quantum}.
Since $\arccos(x)$ is monotonically decreasing for $x \in [0,1]$, we have $0 = \arccos(1) \leq \gamma\left(\ket{\psi},\ket{\phi}\right) \leq \arccos(0) = \pi/2$, for $\ket{\psi}, \ket{\phi} \in \H_{XR}$. In particular, the metric is maximal for orthogonal states:
\begin{equation}
\begin{aligned}
    \gamma(\ket{\psi},\ket{\phi}) = \frac{\pi}{2}
    \quad\Leftrightarrow\quad
    \sqrt{\frac{\left|\braket{\phi|\psi}\right|^2}{\braket{\psi|\psi} \braket{\phi|\phi}}}
    = 0
    \quad\Leftrightarrow\quad
    \braket{\phi | \psi} = 0.
\end{aligned}
\end{equation}
Therefore, by normalizing the metric as $\overline{\gamma}\left(\ket{\psi},\ket{\phi}\right) := (2/\pi)\gamma\left(\ket{\psi},\ket{\phi}\right)$, we obtain a continuous function $\overline{\gamma}$ with $\overline{\gamma}\left(\ket{\psi},\ket{\phi}\right) = 1$ if and only if $\ket{\psi}$ and $\ket{\phi}$ are orthogonal.
Using $\overline{\gamma}$, we define $g : T_t \to [0,1]$ as 
\begin{equation}
    g(X) = g\left(\left(\ket{x_1},\dots,\ket{x_t}\right)\right) := \prod_{i=1}^t \prod_{j=i+1}^t \overline{\gamma} \left(\ket{x_i}, \ket{x_j}\right).\label{eq:gfunc}
\end{equation}
This function is the finite product of continuous functions and is therefore also continuous. 
This function identifies inputs in $\Tor$ as described by the following lemma.
\begin{lemma}\label{le:gfuncpreimage}
    For $g$ as defined in \cref{eq:gfunc}: $g^{-1}([1]) = \Tor$.
\end{lemma}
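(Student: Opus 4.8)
The plan is to establish the two inclusions $g^{-1}(\{1\}) \subseteq \Tor$ and $\Tor \subseteq g^{-1}(\{1\})$ simultaneously, by unwinding the definition of $g$ in \cref{eq:gfunc} together with the two facts about $\overline{\gamma}$ recorded just above its definition: that $\overline{\gamma}(\ket{\psi}, \ket{\phi}) \in [0,1]$ for all states $\ket{\psi},\ket{\phi}$, and that $\overline{\gamma}(\ket{\psi}, \ket{\phi}) = 1$ if and only if $\braket{\psi|\phi} = 0$.

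The only ingredient that is not pure bookkeeping is the elementary observation that a finite product of real numbers lying in $[0,1]$ equals $1$ precisely when every factor equals $1$. The ``if'' direction is immediate; for the ``only if'' direction, if some factor were strictly less than $1$, then bounding all the remaining factors by $1$ and using non-negativity shows the whole product is strictly less than $1$, a contradiction. (Equivalently, this is a trivial induction on the number of factors.)

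Applying this to a point $X = (\ket{x_1}, \dots, \ket{x_t}) \in T_t$: the finitely many factors $\overline{\gamma}(\ket{x_i}, \ket{x_j})$ with $i < j$ all lie in $[0,1]$, so $g(X) = 1$ holds if and only if $\overline{\gamma}(\ket{x_i}, \ket{x_j}) = 1$ for every pair $i < j$, which by the characterization of $\overline{\gamma}$ is equivalent to $\braket{x_i|x_j} = 0$ for every $i < j$. Since $|\braket{x_i|x_j}|$ is symmetric in the indices and the unordered pairs of distinct elements of the set $\Sin$ associated with $X$ are exactly the pairs $\{\ket{x_i}, \ket{x_j}\}$ with $i \neq j$ (using the tuple–set identification remarked on after \cref{eq:def_tt}), this last condition is precisely the defining property of $\Tor$ in \cref{eq:def_tortho}. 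Hence $g(X) = 1 \Leftrightarrow X \in \Tor$, i.e.\ $g^{-1}(\{1\}) = \Tor$.

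I do not anticipate a genuine obstacle, as the statement is essentially definitional. The one subtlety worth a sentence is the behaviour on tuples with a repeated entry: such a tuple cannot lie in $\Tor$ because a normalized state is never orthogonal to itself, and it is consistently also excluded on the $g$ side, since it contributes a factor $\overline{\gamma}(\ket{x_i}, \ket{x_i}) = 0 \neq 1$; thus the tuple–set translation introduces no mismatch in either direction.
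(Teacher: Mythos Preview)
Your proposal is correct and follows essentially the same route as the paper's own proof: both unwind the definition of $g$ to reduce $g(X)=1$ to the requirement that every factor $\overline{\gamma}(\ket{x_i},\ket{x_j})$ equal $1$, and then invoke the equivalence $\overline{\gamma}(\ket{\psi},\ket{\phi})=1 \Leftrightarrow \braket{\psi|\phi}=0$ to match the definition of $\Tor$. Your write-up is more explicit than the paper's (spelling out why a product of numbers in $[0,1]$ equals $1$ only if every factor does, and handling the repeated-entry edge case), but there is no substantive difference in approach.
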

\begin{proof}
From \cref{eq:gfunc}: $g(X) = 1$ for an arbitrary $X \in \Tor$ if and only if $\overline{\gamma}\left(\ket{x_i}, \ket{x_j}\right) = 1$ for all $\ket{x_i} \neq \ket{x_j} \in X$. 
From the definition of $\overline{\gamma}$, this is the case if and only if $\braket{x_i | x_j} = 0$, for all $0 < i,j \leq t$, $i\neq j$, which matches the definition of $\Tor$ in \cref{eq:def_tortho}.
Therefore, $g(X) = 1$ if and only if $X \in \Tor$.
\end{proof}

Since continuous functions such as $g$ are measurable, and the preimage $g^{-1}(\{1\})$ of the measurable set $\{1\}$ is  measurable~\cite{Cohn2015}, \Cref{le:gfuncpreimage} shows that $g^{-1}(\{1\}) = \Tor \in \left(\mathfrak{B}(T_1)\right)^t$.
Lastly, since $\mu_{T_t}(\Tor)$ is finite, we normalize this measure to obtain a finite normalized measure $\mu_{\Tor}$ for the measure space of pairwise orthogonal training samples $\left(\Tor, \mathfrak{B}(\Tor), \mu_{\Tor}\right)$.

\subsection{Measurability of Linearly Dependent Training Samples $(\Tld)$}
In \Cref{subsec:linear_dependent}, we examine training samples that are linearly dependent in $\H_X$ and span a subspace $\Hknown$ of minimal dimension.
To inspect the measurability of the set of tuples of training inputs $\Tld$ of this form, we use the following lemma.

\begin{lemma}\label{le:schmidleq_measurable}
    Let $n,m \geq 1$ be integers and $n \leq m$: The set
    \begin{equation}
        E^{n,m}_{< s} := \{\ket{\psi} \in \mathbb{C}\mathrm{P}^{nm-1} \;\mid\; \text{\emph{Schmidt rank of }}\ket{\psi} < s\} \subseteq \mathbb{C}\mathrm{P}^{nm-1},
    \end{equation}
    containing states from a composite system $\H_A \otimes \H_B$ with $\dim(\H_A) = n$ and $\dim(H_B) = m$, is measurable for any integer $1 < s \leq (n+1)$.
\end{lemma}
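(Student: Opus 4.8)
The plan is to reduce the Schmidt-rank condition to the vanishing of a finite family of polynomials in the coefficients of $\ket{\psi}$, and then transfer this algebraic condition from $\mathbb{C}^{nm}\setminus\{0\}$ down to $\mathbb{C}\mathrm{P}^{nm-1}$. Fix orthonormal bases $\{\ket{i}_A\}_{i=1}^{n}$ of $\H_A$ and $\{\ket{j}_B\}_{j=1}^{m}$ of $\H_B$ and write a representative of $\ket{\psi}$ as $\sum_{i,j} c_{ij}\,\ket{i}_A\ket{j}_B$, assembling the $c_{ij}$ into an $n\times m$ matrix $M(c)$. A Schmidt decomposition of $\ket{\psi}$ is precisely a singular value decomposition of $M(c)$, so the Schmidt rank of $\ket{\psi}$ equals $\operatorname{rank} M(c)$. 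Hence $\ket{\psi}\in E^{n,m}_{<s}$ if and only if $\operatorname{rank} M(c)\le s-1$, i.e. if and only if every $s\times s$ minor of $M(c)$ vanishes.

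First I would observe that each $s\times s$ minor is a homogeneous polynomial of degree $s$ in the entries $c_{ij}$, hence continuous on $\mathbb{C}^{nm}$, and that its vanishing is invariant under $c\mapsto\lambda c$ for $\lambda\in\mathbb{C}\setminus\{0\}$; thus the condition descends to a well-defined subset of $\mathbb{C}\mathrm{P}^{nm-1}$. Letting $\pi:\mathbb{C}^{nm}\setminus\{0\}\to\mathbb{C}\mathrm{P}^{nm-1}$ denote the canonical projection and $Z$ the common zero set of all $s\times s$ minors, the set $Z$ is closed in $\mathbb{C}^{nm}\setminus\{0\}$ (a finite intersection of zero sets of continuous functions), saturated with respect to $\pi$ (a union of full fibers, by scale-invariance of rank), and satisfies $\pi(Z)=E^{n,m}_{<s}$.

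Next I would conclude that $E^{n,m}_{<s}$ is closed in $\mathbb{C}\mathrm{P}^{nm-1}$. The Fubini--Study metric topology on $\mathbb{C}\mathrm{P}^{nm-1}$ coincides with the quotient topology induced by $\pi$, and $\pi$ is an open map; therefore $\pi$ maps the open saturated set $(\mathbb{C}^{nm}\setminus\{0\})\setminus Z$ onto an open set, which by surjectivity of $\pi$ equals $\mathbb{C}\mathrm{P}^{nm-1}\setminus E^{n,m}_{<s}$. So $E^{n,m}_{<s}$ is closed, hence lies in the Borel $\sigma$-algebra $\mathfrak{B}(\mathbb{C}\mathrm{P}^{nm-1})$ and is measurable; restricting (and, if convenient, normalizing) the ambient Fubini--Study measure to it then yields the measure space used in \Cref{subsec:linear_dependent}. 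For $s=n+1$ the set is all of $\mathbb{C}\mathrm{P}^{nm-1}$ and the claim is trivial, so only the range $1<s\le n$ actually requires this argument.

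The step I expect to be the main obstacle is the topological transfer: one must be careful that the Fubini--Study topology is genuinely the quotient topology of $\pi$ and that $\pi$ is open, so that closedness of the saturated set $Z$ upstairs really descends to $\mathbb{C}\mathrm{P}^{nm-1}$. A clean way to sidestep openness of $\pi$ is to exhibit $E^{n,m}_{<s}$ directly as the zero set of a single continuous function on $\mathbb{C}\mathrm{P}^{nm-1}$ — for instance $\ket{\psi}\mapsto \sum_{k} |m_k(c)|^2 / \|c\|^{2s}$, where the $m_k$ are the $s\times s$ minors of $M(c)$; this ratio is invariant under $c\mapsto\lambda c$ (numerator and denominator both scale by $|\lambda|^{2s}$), hence well defined and continuous on the quotient, and it vanishes exactly on $E^{n,m}_{<s}$. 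The identification of Schmidt rank with matrix rank and the polynomial (hence continuous and homogeneous) nature of the minors are routine and would be stated without a detailed proof.
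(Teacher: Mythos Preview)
Your argument is correct and rests on the same core idea as the paper: the Schmidt rank of $\ket{\psi}$ equals the rank of an associated matrix, and the condition $\operatorname{rank}<s$ is the common zero set of finitely many minors, hence the preimage of $\{0\}$ under a continuous function. The paper carries this out almost exactly as in your alternative route: it defines $f_s(\ket{\psi})=\sum_{Y}|\det(Y)|$ over all $s\times s$ submatrices and takes $E^{n,m}_{<s}=f_s^{-1}(\{0\})$; the only substantive difference is that the paper takes minors of the reduced density matrix $\Tr_B[\ketbra{\psi}{\psi}]=M(c)M(c)^{\dagger}$ rather than of $M(c)$ itself. Working with $\ketbra{\psi}{\psi}$ has the small advantage that it is automatically well defined on $\mathbb{C}\mathrm{P}^{nm-1}$ once one fixes normalized representatives (as the paper does in \Cref{app:measures}), so the paper does not need your explicit homogeneity/normalization bookkeeping; conversely, your version using $M(c)$ is a bit more direct, and your careful treatment of the quotient topology yields the stronger conclusion that $E^{n,m}_{<s}$ is closed, not merely Borel.
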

\begin{proof}
Trivially, $E^{n,m}_{<(n+1)} = \mathbb{C}\mathrm{P}^{nm-1}$ is measurable, since $n$ is the maximal possible Schmidt rank with the factorization $\H_A \otimes \H_B$.

Any state $\ket{\psi}$ with Schmidt rank $r < n$ is given by its Schmidt decomposition as
\begin{equation}
    \ket{\psi} = \sum_{k=1}^r \sqrt{c_k} \ket{\xi_k} \ket{\zeta_k}.
\end{equation}
The partial trace of the density operator $\ketbra{\psi}{\psi}$ with respect to $\H_B$ (see also \cref{eq:diracdelta_in_trace_r}) is 
\begin{equation}
    \Tr_B\left[ \ketbra{\psi}{\psi} \right] = \sum_{k=1}^r c_k \ketbra{\xi_k}{\xi_k}.
\end{equation}
This is an $n\times n$ matrix of rank $r$.
Let $X_s(M)$ be the set of all $s\times s$ submatrices of a matrix $M$.
We define the function $f_s : \mathbb{C}\mathrm{P}^{nm-1} \to \mathbb{R}$ as 
\begin{equation}
    f_s(\ket{\psi}) = \sum_{Y \in X_s(\Tr_B\left[ \ketbra{\psi}{\psi} \right])} \left| \det(Y)\right|
\end{equation}
The function $f_s(\ket{\psi})$ is zero if and only if $\det(Y)$ is zero for all $s\times s$ submatrices $Y$.
This the case if and only if the Schmidt rank $r = \rank(\Tr_B\left[ \ketbra{\psi}{\psi} \right])) < s$~\cite{Cubitt2008, Horn2012} and therefore $\ket{\psi} \in E^{n,m}_{<s}$.
The function $f_s$ is measurable because it is the finite sum of  measurable functions $\left|\det(Y)\right|$.
Therefore, the preimage $f_s^{-1}(\{0\}) = E^{n,m}_{<s}$ of the measurable set $\{0\}$ is measurable.
\end{proof}

Since the complement and finite intersection of measurable sets is measurable, \Cref{le:schmidleq_measurable} implies that the set 
\begin{equation}
    E^{n,m}_{s} =  E^{n,m}_{< (s+1)} \cap \left( E_{< s}^{n,m} \right)^c,
\end{equation}
which only contains states of a particular Schmidt rank $s$, is measurable. 

The set $\Tld$, as defined in \Cref{subsec:linear_dependent}, contains tuples of states $\Sin$ such that there is one particular state $\ket{\psi_{\text{max}}} \in \Sin$ of Schmidt rank $r_{\text{max}}$.
Thus, $\ket{\psi_{\text{max}}}$ is an element of the set $E_{r_{\text{max}}}^{d, d_R}$ and all remaining input states $\ket{\psi_j}$ are elements of $\H_{S_X,\text{max}} \otimes \H_R$, where $\H_{S_X,\text{max}}$ with $\dim(\H_{S_X,\text{max}}) = r_{\text{max}}$ is the known subspace described by $\ket{\psi_{\text{max}}}$.
Thus, any set $\Sin$ is given as a tuple
\begin{equation}
    \Sin = \left( \ket{\psi_{\text{max}}},\ket{\psi_2}, \dots, \ket{\psi_t} \right)
    \in \left( E_{r_{\text{max}}}^{d, d_R} \times \left( \mathbb{C}\mathrm{P}^{r_{\text{max}} \cdot d_R-1} \right)^{(t-1)}\right).
\end{equation}
Since the maximal Schmidt rank $1 \leq r_{\text{max}} \leq d$ for the tuples of inputs in $\Tld$ is arbitrary, we give the set of possible tuples of inputs as
\begin{equation}
    \Tld = \bigcup_{r=1}^d \left( E_r^{d,d_R} \times \left( \mathbb{C}\mathrm{P}^{r \cdot d_R-1}\right)^{(t-1)}\right) \subseteq T_t.
\end{equation}
Since $\Tld$ is a subset of $T_t$, $\mu_{T_t}(\Tld)$ is finite. 
Thus, we normalize this measure to obtain a finite normalized measure $\mu_{\Tld}$ for the measure space $\left(\Tld, \mathfrak{B}(\Tld), \mu_{\Tld}\right)$.

\section{Examples of the Effect of Training Sample Structure on the Risk}
This section gives examples of the effect of linear dependent and orthogonal training samples on the quality of the learned hypothesis transformations.
These examples investigate the risk after training, which, according to \cref{eq:risk_trace} is determined by $\left|\Tr[U^\dagger\hypo]\right|$, since the dimension $d$ is fixed by the given target unitary.

The risk is lowest when the absolute trace $\left|\Tr[U^\dagger\hypo]\right|$ is maximal.
Since the trace of a matrix equals the sum of its eigenvalues $\lambda_i$ and each eigenvalue of a unitary matrix $U^\dagger \hypo$ has $\left|\lambda_i\right| = 1$, the triangle inequality gives an upper bound for $\left|\Tr[U^\dagger\hypo]\right|$ as 
\begin{equation}
    \left|\Tr[U^\dagger\hypo]\right| \leq \sum_{i=1}^d |\lambda_i| = d.
\end{equation}

In the following, we, therefore, evaluate $\left|\Tr[U^\dagger\hypo]\right|$ for various hypothesis unitaries $V_S$ that were obtained from different compositions of training samples.
We use concrete training samples with $\Ar \cdot t = d$, i.e., with a lower bound for the expected risk of zero, which do not satisfy \Cref{def:nonortho} and \Cref{def:lihx}.
We show that for these samples it is possible that $\left|\Tr[U^\dagger\hypo]\right| < d$, which implies that the resulting risk is not necessarily minimal.

\subsection{Orthogonal Training Samples}\label{sec:cex_ortho}
The first example highlights the effect of using two orthogonal, and therefore not OPR, training samples on the risk after training. For this example, we use the set
\begin{equation}
    \SQ = \{\left(\ket{0}_X \ket{0}_R, \ket{+}_X \ket{0}_R\right),
    \left(\ket{1}_X \ket{1}_R, \ket{-}_X \ket{1}_R\right)\}
\end{equation}
to learn the target unitary $U=H$. The set of inputs $\Sin=\{\ket{00},\ket{11}\}$ is linearly independent in $\H_X$. However, $\Sin$ does not satisfy \Cref{def:nonortho} since the inner product of the inputs $\braket{00|11} = 0$.

To study the effect of these orthogonal training samples on the quality of the QNN, we infer the form of possible hypothesis transformations that can be learned from $\SQ$. 
Since we assume perfect training, according to \cref{eq:perfect_training_sample_req}, each learned unitary $V_S$ must satisfy $(V_S \otimes I) \ket{\psi_j} = e^{i\theta_j} (U \otimes I) \ket{\psi_j}$ for all $\ket{\psi_j} \in \Sin$.
Therefore, $V_S$ must be a solution to the system of equations
\begin{equation}
\begin{aligned}\label{eq:orthoexsystem}
    (\hypo \otimes I)\ket{00} = e^{i\theta_1} \ket{+0},\\
    (\hypo \otimes I)\ket{11} = e^{i\theta_2} \ket{-1}.
\end{aligned}
\end{equation}
These equations are satisfied for $\hypo = H$, with $\theta_1 = \theta_2 = 0$. 
In this case, $\left|\Tr[U^\dagger\hypo]\right| = \left|\Tr[H^\dagger H]\right| = d$ shows that the risk would be minimal and the correct unitary was learned.
However, \cref{eq:orthoexsystem} is also satisfied for 
\begin{equation}
    \widehat{\hypo} = 
    \frac{1}{\sqrt{2}}
    \left(
    \begin{array}{rr}
    1 & -1\\
    1 & 1
    \end{array}
    \right),
\end{equation}
with $\theta_1 = 0$ and $\theta_2 = \pi$.
In this case, $\left|\Tr\left[U^\dagger \widehat{\hypo}\right]\right| = 0 < d$, which, using \cref{eq:risk_trace}, implies that the risk is not minimal. 
Particularly, evaluating $\widehat{\hypo}\ket{+} = \ket{1}$ shows that this unitary fails to reproduce $U=H$.
Due to the orthogonality of the training samples, a different phase angle $\theta_j$ was learned for each training sample $\ket{\psi_j} \in \Sin$. 
Thus, although the hypothesis unitary correctly reproduces the target unitary on the input states, it fails to reproduce $U$ on superpositions of the input states.

\subsection{Linear Dependent Training Samples}\label{sec:cex_lihx}
The second example investigates the effect of training samples that are linearly dependent in $\H_X$. For this example, we use training samples \mbox{$\SQ = \{ (\ket{\psi_1}, \ket{\phi_1}), (\ket{\psi_2},\ket{\phi_2})\}$} of fixed Schmidt rank $r=2$, with 
\begin{equation}
\begin{aligned}
    \ket{\psi_1} &= \frac{1}{\sqrt{2}} \left(\ket{00}_X \ket{0}_R + \ket{01}_X \ket{1}_R \right),\\
    \ket{\psi_2} &= \frac{1}{\sqrt{2}} \left(\ket{\Phi^+}_X \ket{0}_R + \ket{\Phi^-}_X \ket{1}_R \right)
\end{aligned}
\end{equation}
to learn the unitary $U = Z^{\otimes 2}$. Therefore, the expected outputs are
\begin{equation}
\begin{aligned}
    \ket{\phi_1} &= (Z^{\otimes 2} \otimes I)\ket{\psi_1} = \frac{1}{\sqrt{2}} \left(\ket{00}_X \ket{0}_R - \ket{01}_X \ket{1}_R \right),\\
    \ket{\phi_2} &= (Z^{\otimes 2} \otimes I)\ket{\psi_2} = \frac{1}{\sqrt{2}} \left(\ket{\Phi^+}_X \ket{0}_R + \ket{\Phi^-}_X \ket{1}_R \right).
\end{aligned}
\end{equation}
The inputs $\ket{\psi_1}$ and $\ket{\psi_2}$ are non-orthogonal, and therefore OPR (see \Cref{def:nonortho}).
Since $rt = 4 = d$, the lower bound for the expected risk according to the QNFL theorem is zero (\cref{eq:qnfl}).
However, even though the inputs are linearly independent in the combined system $\H_{XR}$, they are not linearly independent in $\H_X$, i.e., \Cref{def:lihx} (see \Cref{subsec:linear_independence}) is not satisfied: $\ket{00} = \frac{1}{\sqrt{2}} \left(\ket{\Phi^+} + \ket{\Phi^-}\right)$.

We show the effect of these training samples on the possible resulting unitary hypothesis matrices by solving
\begin{equation}\label{eq:eqsystemlihx}
\begin{aligned}
    (\hypo \otimes I)\ket{\psi_1} &= \ket{\phi_1},\\
    (\hypo \otimes I)\ket{\psi_2} &= \ket{\phi_2}
\end{aligned}
\end{equation}
for $\hypo$.
We omit the eigenvalue $e^{i\theta_j}$ in this system of equations since, according to \Cref{le:evnonortho}, the eigenvalues are the same for all samples if OPR inputs are used.
Possible solutions for \cref{eq:eqsystemlihx} have the form 
\begin{equation}\label{eq:faultylihx}
    \hypo(\varphi) = \ketbra{00}{00} + (-1)\ketbra{01}{01} + e^{i\varphi}\ketbra{10}{10} + \ketbra{11}{11}
\end{equation}
for arbitrary angles $\varphi$, because:
\begin{equation}
\begin{aligned}
    (\hypo(\varphi) \otimes I) \ket{\psi_1} 
    &= \frac{1}{\sqrt{2}}
    \left(
        \hypo(\varphi)\ket{00}_X \otimes I \ket{0}_R
        + \hypo(\varphi)\ket{01}_X \otimes I \ket{1}_R
    \right)\\
    &= 
    \frac{1}{\sqrt{2}}
    \left(
        \ket{00}_X\braket{00|00} \otimes \ket{0}_R
        + (-1)\ket{01}_X\braket{01|01} \otimes \ket{1}_R
    \right)\\
    &= 
    \frac{1}{\sqrt{2}}
    \left(
        \ket{00}_X\ket{0}_R
        -\ket{01}_X \ket{1}_R
    \right)\\
    &= \ket{\phi_1},
\end{aligned}
\end{equation}
\begin{equation}
\begin{aligned}
    (\hypo(\varphi) \otimes I) \ket{\psi_2} 
    &= \frac{1}{\sqrt{2}}
    \left(
        \hypo(\varphi)\ket{\Phi^+}_X \otimes I \ket{0}_R
        + \hypo(\varphi)\ket{\Phi^-}_X \otimes I \ket{1}_R
    \right)\\
    &= 
    \frac{1}{\sqrt{2}}
    \left(
        \ket{\Phi^+}_X \otimes \ket{0}_R
        + \ket{\Phi^-}_X \otimes \ket{1}_R
    \right)\\
    &= \ket{\phi_2}.
\end{aligned}\label{eq:lixheqsample2}
\end{equation}
The equality in \cref{eq:lixheqsample2} follows since 
\begin{equation}
\begin{aligned}
    \hypo(\varphi)
    \frac{1}{\sqrt{2}} \left(\ket{00} \pm \ket{11}\right) &=
    \frac{1}{\sqrt{2}} \left(\hypo(\varphi)\ket{00} \pm \hypo(\varphi)\ket{11}\right)\\
    &= \frac{1}{\sqrt{2}} \left(\ket{00} \pm \ket{11}\right).
\end{aligned}
\end{equation}
The hypothesis unitaries described by $\hypo(\varphi)$ match $Z^{\otimes 2}$ for $\varphi = \pi$.
However, \cref{eq:faultylihx} also allows for other solutions, such as $\hypo(0)$ with $\left|\Tr\left[U^\dagger \hypo(0)] \right]\right| = 2 < d$, which results in a nonzero risk according to the argument at the beginning of this section. 

\section{Lower Bound for Orthogonal Training Samples}\label{app:orthobound}
The derivation of the lower bound in \cref{eq:qnfl_ortho}
reevaluates the expected absolute trace of $U^\dagger \hypo$ for pairwise orthogonal training inputs. 
For this reason, we replace the innermost integral in \cref{eq:tracedoubleintegral} by the integral over all sets of pairwise orthogonal training inputs $\Tor$ (see \Cref{app:measures}):
\begin{equation}
\begin{aligned}
&\mathbb{E}_U\left[ \mathbb{E}_S^{\text{ortho}} \left[ \left| \Tr\left[U^\dagger \hypo \right]\right|^2 \right] \right]\\
    &\qquad\qquad\qquad= \int_{U\in\mathcal{U}(d)} \int_{\Sin \in \Tor} \left| \Tr\left[U^\dagger \hypo \right]\right|^2 
    \diff \mu_{\Tor}(\Sin) \diff \mu_{\mathcal{U}(d)}(U).\label{eq:orthodoubleintegral}
\end{aligned}
\end{equation}
The proof then proceeds as in \Cref{sec:mod_ent} by applying Fubini's theorem and solving the integral over the set of unitary matrices to obtain 
\begin{equation}
    \mathbb{E}_S^{\text{ortho}} \left[\int \left| \Tr[U^\dagger \hypo] \right|^2 \diff U \right]= \mathbb{E}_S^{\text{ortho}} \left[\left| \Tr[X]\right|^2 + 1\right].\label{eq:uvsintegral_ortho}
\end{equation}
However, in contrast to the argument in \Cref{subsec:lower_bound_risk}, we do not use an upper bound for $\left|\Tr\left[X\right]\right|^2$. 
Instead, we evaluate the expected squared absolute trace $\mathbb{E}_S^{\text{ortho}} \left[\left|\Tr\left[X\right]\right|^2\right]$ for pairwise orthogonal training samples.

For an arbitrary basis $B(\SQX)$ of the known subspace, $B(\SQX) \cap S_{X,j}$ (see \cref{eq:SXJset}) contains all states $\ket{\xi_{j,k}}$ that were obtained from the Schmidt decomposition of $\ket{\psi_j}$ and are elements of the basis $B(\SQX)$.
The cardinality of the intersections $B(\SQX) \cap S_{X_j}$ is used in the following to calculate the expectation value.

According to the representation of $X$ in \cref{eq:Xform}, the trace of $X$ is given as 
\begin{equation}
    \Tr\left[X\right] = \sum_{l=1}^{\dSQX} e^{i\theta_l}.\label{eq:neworthoproof1}
\end{equation}
The summands in \cref{eq:neworthoproof1} are the eigenvalues of the states $\ket{\xi_l} \in B(\SQX)$ under the application of $U^\dagger \hypo$.
According to \Cref{le:phaseform}, $e^{i\theta_l} = e^{i\theta_j}$ for all $\card(B(\SQX) \cap S_{X,j})$ elements $\ket{\xi_l}$ that are obtained from the Schmidt decomposition of the same input $\ket{\psi_j}$. 
Therefore,
\begin{equation}
    \Tr\left[X\right] = \sum_{l=1}^{\dSQX} e^{i\theta_l}
    = \sum_{j=1}^t \card(B(\SQX) \cap S_{X,j})e^{i\theta_j}.\label{eq:trxcardsum}
\end{equation}
Thus, the expectation value $\mathbb{E}_S^{\text{ortho}}\left[ \left|\Tr[X]\right|^2\right]$ depends on the phase angles $\theta_j$. 
Since \Cref{le:evopr} does not apply for pairwise orthogonal training samples to guarantee a matching phase angle, we have to calculate this expression for phase angles that are distributed randomly.
For this purpose, we use the following lemma.

\begin{lemma}\label{le:complexexp}
    For $c_j = a_j e^{i \theta_j} \in \mathbb{C}$, $1 \leq j \leq n$, with $a_j \geq 0$ and arguments 
    $\theta_j \in (-\pi, \pi]$ that are distributed uniformly at random, the following holds:
    \begin{equation}
        \mathbb{E}\left[\left| \sum_{j=1}^n c_j\right|^2\right]
        = \sum_{j=1}^n a_j^2.
    \end{equation}
\end{lemma}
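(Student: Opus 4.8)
The plan is to expand the squared modulus $\left|\sum_{j=1}^n c_j\right|^2 = \sum_{j=1}^n\sum_{k=1}^n c_j \overline{c_k}$ and then use linearity of expectation together with the independence of the phases $\theta_j$. Writing $c_j = a_j e^{i\theta_j}$, the summand becomes $a_j a_k e^{i(\theta_j - \theta_k)}$, so that
\begin{equation}
    \mathbb{E}\left[\left|\sum_{j=1}^n c_j\right|^2\right]
    = \sum_{j=1}^n\sum_{k=1}^n a_j a_k \,\mathbb{E}\!\left[e^{i(\theta_j-\theta_k)}\right].
\end{equation}
The diagonal terms $j = k$ contribute $a_j^2\,\mathbb{E}[e^{i\cdot 0}] = a_j^2$, giving exactly the claimed right-hand side $\sum_{j=1}^n a_j^2$. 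It therefore remains to show that every off-diagonal term vanishes, i.e.\ $\mathbb{E}[e^{i(\theta_j-\theta_k)}] = 0$ whenever $j \neq k$.

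First I would make precise the intended model: the phases $\theta_1,\dots,\theta_n$ are independent and each is uniformly distributed on $(-\pi,\pi]$. For a single phase $\theta$ uniform on $(-\pi,\pi]$ and any nonzero integer shift, $\mathbb{E}[e^{i\theta}] = \frac{1}{2\pi}\int_{-\pi}^{\pi} e^{i\theta}\diff\theta = 0$; more generally $\mathbb{E}[e^{im\theta}] = 0$ for every integer $m \neq 0$. For $j\neq k$, independence of $\theta_j$ and $\theta_k$ factorizes the expectation as $\mathbb{E}[e^{i\theta_j}]\,\mathbb{E}[e^{-i\theta_k}] = 0 \cdot 0 = 0$. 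Substituting back leaves only the diagonal, which proves the lemma. (If instead one wants to allow the weaker hypothesis that only the \emph{differences} $\theta_j - \theta_k$ are uniform, the same conclusion follows directly from $\mathbb{E}[e^{i(\theta_j-\theta_k)}]=0$ without invoking independence.)

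The only real subtlety — and the step I would be most careful about — is justifying the interchange of expectation and the finite double sum and pinning down the precise probabilistic assumption on the $\theta_j$; since the sum is finite and each $|c_j e^{i\theta_j}|$ is bounded by $a_j$, linearity of expectation applies without any integrability concerns, so this is routine once the model is stated. Everything else is the elementary computation $\int_{-\pi}^{\pi} e^{i\theta}\diff\theta = 0$. I would also remark that this is the exact analogue, in the integration-over-the-circle setting, of the vanishing-cross-terms argument already used for the Haar integral $\int 2\Re(\Tr[X]\cdot\Tr[Y^\dagger])\diff Y = 0$ in \Cref{subsec:exp_sqr_abs_tr}, so no genuinely new idea is needed.
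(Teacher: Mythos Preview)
Your proof is correct. The paper proves the same lemma by induction on $n$: at the inductive step it writes $\left|\sum_{j=1}^{n+1} c_j\right|^2 = \left|\sum_{j=1}^n c_j\right|^2 + |c_{n+1}|^2 + 2\Re\!\left(\left(\sum_{j=1}^n c_j\right)\overline{c_{n+1}}\right)$, applies the induction hypothesis to the first term, and then argues that the expected real part of $e^{i(\theta_j - \theta_{n+1})}$ vanishes because $\theta_j - \theta_{n+1}$ is uniformly distributed on the circle. Your direct double-sum expansion does the same computation in one stroke rather than peeling off one summand at a time, and is arguably cleaner; the induction buys nothing here since all cross terms are handled by the identical argument. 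You are also more explicit than the paper about the underlying probabilistic assumption (i.i.d.\ uniform phases, or at least uniform pairwise differences), which the paper leaves implicit when it asserts that $\theta_j - \theta_{n+1}$ is uniformly distributed.
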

\begin{proof}
This lemma is proven by induction on $n$.

\noindent\emph{Base case:} $n=1$. The expectation value for one summand is
\begin{equation}
    \mathbb{E}\left[\left|c_1\right|^2\right] = a_1^2.
\end{equation}

\noindent\emph{Induction hypothesis:} Let $n$ be an arbitrary integer $\geq 1$ and assume 
\begin{equation}
    \mathbb{E}\left[\left| \sum_{j=1}^n c_j\right|^2\right]
        = \sum_{j=1}^n a_j^2.
\end{equation}

\noindent\emph{Induction step:} Consider the equality for $n+1$. Then 
\begin{align}
    \mathbb{E}\left[\left| \sum_{j=1}^{n+1} c_j\right|^2\right]
    &= \mathbb{E}\left[\left| \left(\sum_{j=1}^n c_j\right) + c_{n+1}\right|^2\right]\\
    &= \mathbb{E}\left[\left|\sum_{j=1}^n c_j\right|^2 + \left|c_{n+1}\right|^2 + 2 \Re\left(\left(\sum_{j=1}^n c_j\right) \overline{c_{n+1}}\right)\right].\label{eq:complexexp_is1}
\end{align}
Using the linearity of the expectation value and the fact that $|c_{n+1}|^2 = a_{n+1}^2$ is independent of the argument $\theta_{n+1}$, \cref{eq:complexexp_is1} simplifies to
\begin{align}
    \mathbb{E}\left[\left| \sum_{j=1}^{n+1} c_j\right|^2\right]
    &= \mathbb{E}\left[\left|\sum_{j=1}^n c_j\right|^2\right] + a_{n+1}^2 + 2 \mathbb{E}\left[\Re\left(\left(\sum_{j=1}^n c_j\right) \overline{c_{n+1}}\right)\right].
\end{align}
After applying the induction hypothesis to give 
\begin{align}
    \mathbb{E}\left[\left| \sum_{j=1}^{n+1} c_j\right|^2\right]
    &= \left(\sum_{j=1}^n a_j^2\right) + a_{n+1}^2 + 2 \mathbb{E}\left[\Re\left(\left(\sum_{j=1}^n c_j\right) \overline{c_{n+1}}\right)\right]\\
    &= \left(\sum_{j=1}^{n+1} a_j^2\right) + 2 \mathbb{E}\left[\Re\left(\left(\sum_{j=1}^n c_j\right) \overline{c_{n+1}}\right)\right],\label{eq:complexexp_is2}
\end{align}
it remains to calculate the remaining expectation value on the right-hand side.
\begin{align}
\mathbb{E}\left[\Re \left( \left( \sum_{j=1}^n c_j\right) \overline{c_{n+1}}\right)\right]
    &= \mathbb{E}\left[\Re \left(  \sum_{j=1}^n c_j\overline{c_{n+1}} \right)\right]\\
    &= \mathbb{E}\left[\Re \left(  \sum_{j=1}^n a_j a_{n+1} \exp({i (\theta_j - \theta_{n+1})}) \right)\right]\\
    &= \sum_{j=1}^n a_j a_{n+1} \mathbb{E}\left[\Re \left(\exp({i (\theta_j - \theta_{n+1})}) \right)\right].\label{eq:complexexp_is3}
\end{align}
The innermost expectation value in \cref{eq:complexexp_is3}, is the expected real part of a complex number along the unit circle. The expected real part of $\exp({i (\theta_j - \theta_{n+1}))}$ for a uniformly distributed angle $\theta_j - \theta_{n+1}$ is $0$. 
This is because, by uniformly sampling the argument of a complex number, values along the right (positive) and left (negative) semicircle of the unit circle are obtained with equal probability.
Therefore 
\begin{equation}
    \mathbb{E}\left[\Re \left( \left( \sum_{j=1}^n c_j\right) \overline{c_{n+1}}\right)\right] = 0.
\end{equation}
By applying the equation above to \cref{eq:complexexp_is2}, this argument shows
\begin{equation}
    \mathbb{E}\left[\left| \sum_{j=1}^{n+1} c_j\right|^2\right]
    = \sum_{j=1}^{n+1} a_j^2,
\end{equation}
which concludes the induction step to prove
\begin{equation}
    \mathbb{E}\left[\left| \sum_{j=1}^{n} c_j\right|^2\right]
    = \sum_{j=1}^{n} a_j^2.
\end{equation}
\end{proof}

\noindent We proceed by applying \Cref{le:complexexp} to \cref{eq:trxcardsum}:
\begin{align}
    \mathbb{E}_S^{\text{ortho}} \left[\left|\Tr[X]\right|^2\right]
    &= \mathbb{E}\left[\left| \sum_{j=1}^t \card(B(\SQX) \cap S_{X,j})e^{i\theta_j} \right|^2\right]\\
    &= \sum_{j=1}^t \card(B(\SQX) \cap S_{X,j})^2.
\end{align}
In the case where all states $\ket{\psi_j}$ are linearly independent in $\H_X$, all $\xijk$ are in $B(\SQX)$ and therefore $\card(B(\SQX) \cap S_{X,j}) = \card(S_{X,j}) = r_j$ for all $1 \leq j \leq t$.
In the case where linear independence in $\H_X$ is not given, not every state in $S_{X,j}$ is necessarily an element of $B(\SQX)$. Therefore in the general case, $r_j$ is an upper bound: $\card(B(\SQX) \cap S_{X,j}) \leq r_j$. This results in the upper bound for the expected squared absolute trace of $X$ for pairwise orthogonal training samples 
\begin{equation}
    \mathbb{E}_S^{\text{ortho}} \left[\left|\Tr[X]\right|^2\right] \leq 
    \sum_{j=1}^t r_j^2. 
\end{equation}

Since this expression is independent of the actual set of training inputs $\Sin \in \Tor$ that was used for training, we can use this upper bound in \cref{eq:uvsintegral_ortho} to obtain
\begin{equation}
    \mathbb{E}_U\left[ \mathbb{E}_S^{\text{ortho}} \left[ \left| \Tr\left[U^\dagger \hypo \right]\right|^2 \right] \right] = \mathbb{E}_S^{\text{ortho}}\left[\left|\Tr[X]\right|^2\right] + 1 \leq \left(\sum_{j=1}^t r_j^2\right) + 1.
\end{equation}
Similar to~\Cref{sec:mod_ent}, this inequality is applied to \cref{eq:risk_trace} to give the lower bound for the expected risk for pairwise orthogonal training samples:
\begin{align}
    \mathbb{E}_U \left[\mathbb{E}_S^{\text{ortho}}\left[R_U(\hypo)\right]\right] 
    &= 1 - \frac{d + \mathbb{E}_U \left[\mathbb{E}_S^{\text{ortho}}\left[\left| \Tr\left[ U^\dagger \hypo\right]\right|^2\right]\right]}{d(d+1)}\\
    &\geq 
    1 - \frac{\left(\sum_{j=1}^t r_j^2\right) + d + 1}{d(d+1)}.
\end{align}

\section{Training Samples for Experiments}\label{app:trainingdata}
For all experiments, the outputs that correspond to the inputs in $\Sin$ are computed by classically multiplying $(U \otimes I)$ with the input $\ket{\psi_j} \in \Sin$. 
Furthermore, the different properties that are studied in the experiments are preserved under the unitary transformation $(U \otimes I)$, which will be shown in the following.

\paragraph{Average Schmidt rank:} 
For an input $\ket{\psi_j}$ given by its Schmidt decomposition
\begin{equation}
    \ket{\psi_j} = \sum_{k=1}^{r_j} \sqrt{c_{j,k}} \; \xijk_X \zejk_R,
\end{equation}
the expected output $\ket{\phi_j}$ is calculated using the linearity of $(U \otimes I)$:
\begin{align}
    \ket{\phi_j} &= (U \otimes I) \sum_{k=1}^{r_j} \sqrt{c_{j,k}} \; \xijk_X \zejk_R\\
    &= \sum_{k=1}^{r_j} \sqrt{c_{j,k}} \; U\xijk_X \zejk_R.\label{eq:schmidt_output}
\end{align}
Since the number $r_j$ of coefficients $\sqrt{c_{j,k}} \geq 0$ is preserved for the output according to \cref{eq:schmidt_output}, the Schmidt rank for individual training samples $\ket{\psi_j}$ is preserved under the application of $(U \otimes I)$. 
Therefore, the average Schmidt rank is preserved for the whole set of training samples.

\paragraph{Orthogonality:}
Since $(U \otimes I)$ is a unitary transformation and unitary transformations preserve the inner product, it holds that for two inputs $\ket{\psi_a}, \ket{\psi_b} \in \Sin$ and their respective expected outputs $\ket{\phi_a} = (U \otimes I)\ket{\psi_a}$ and $\ket{\phi_b} = (U \otimes I)\ket{\psi_b}$:
\begin{equation}
    \braket{\psi_a | \psi_b} = \Braket{\psi_a | (U \otimes I)^\dagger (U \otimes I) | \psi_b} = \braket{\phi_a|\phi_b}.
\end{equation}
Therefore (non-)orthogonal inputs in the set of training samples are mapped to \mbox{(non-)orthogonal} outputs.

\paragraph{Linear dependence in $\H_X$:}
Assume that the set of inputs $\Sin$ is not linear independent in $\H_X$, which implies that the set $\SQX$ as defined in \Cref{sec:mod_ent} is linearly dependent.
We define the set $S_{X_{\text{out}}}$ to contain the Schmidt decomposed states in $\H_X$ in the expected outputs in the set of training samples and show that the linear dependence of $\SQX$ implies the linear dependence of $S_{X_{\text{out}}}$.
\cref{eq:schmidt_output} shows that $S_{X_{\text{out}}}$ has the form
\begin{equation}
    S_{X_{\text{out}}} := \{ U \ket{\xi}\;|\;\ket{\xi} \in \SQX \}.
\end{equation}
Since $\SQX$ is linear dependent, any $\ket{\xi_k} \in \SQX$  can be expressed as the linear combination
\begin{equation}
    \ket{\xi_k} = \sum_{l=1, k\neq l}^{\card(\SQX)} \alpha_l \ket{\xi_l},
\end{equation}
with coefficients $\alpha_l \in \mathbb{C}$.
Multiplying with $U$ on both sides and making use of the linearity of $U$ yields
\begin{equation}
    U\ket{\xi_k} = U \left( \sum_{l=1, k\neq l}^{\card(\SQX)} \alpha_l \ket{\xi_l} \right) = \sum_{l=1, k\neq l}^{\card(\SQX)} \alpha_l U \ket{\xi_l}.
\end{equation}
Therefore each linearly dependent $\ket{\xi_k} \in \SQX$ is mapped to a linearly dependent $U\ket{\xi_k} \in S_{X_{\text{out}}}$.
Thus, the linear dependence of $S_{X_{\text{out}}}$ follows from the linear dependence of $\SQX$.

\paragraph{}
To summarize, if the inputs $\Sin$, that are used in our experiments, satisfy the examined properties, this implies that the outputs satisfy the properties as well.
Consequently, this section focuses solely on how the training inputs $\Sin$ are generated.

\subsection{Samples of Varying Schmidt Rank}
For the first experiment (\Cref{subsec:exp1}), we require training samples of varying Schmidt rank. 
The average Schmidt rank of the $t$ training inputs should be $\Ar$.
To remain in the valid range of $1 \leq r_j \leq d$, while preserving a mean rank of $\Ar$, we sample $t$ states $\ket{\psi_j}$ of rank $r_j \in [\Ar-o, \Ar+o]$ with the offset $o = \min(\Ar-1, d-\Ar)$. 
Each individual $\ket{\psi_j}$ is generated as 
\begin{equation}
    \ket{\psi_j} = \sum_{k=1}^{r_j} \sqrt{c_{j,k}} \; P_j\ket{k}_X \otimes Q_j\ket{k}_R,
\end{equation}
where $P_j$ and $Q_j$ are sampled uniformly at random according to the Haar measure on $\mathcal{U}(d)$ and the coefficients $\sqrt{c_{j,k}} \in (0,1]$ are sampled uniformly at random and are normalized such that $\sum_{k=1}^{r_j} c_{j,k} = 1$.

For the relatively small training set sizes that are used in the experiments, picking random Schmidt ranks $r_j$ in the specified range might lead to incorrect average ranks $\Ar$.
Therefore, they are picked evenly:
For each input $\ket{\psi_j}$ with $r_j = \Ar+o_j$, another input $\ket{\psi_{j+1}}$ with $r_{j+1} = \Ar-o_j$ is generated. 
This way, the average rank of $\Sin$ is $\Ar$ exactly.

\subsection{Orthogonal Training Samples}
For the second experiment (\Cref{subsec:exp2}), we generate training samples of fixed Schmidt rank $r$ that are pairwise orthogonal (and therefore not OPR) and linearly independent in $\H_X$.
This is achieved by first creating a set $\overline{\Sin} = \{\ket{\gamma_j} \; | \; \ket{\gamma_j} \in \H_X \otimes \H_R, 1 \leq j \leq t \}$ of $t \leq d$ pairwise orthogonal states of Schmidt rank $r$. 
The states in this set are further composed in a way such that the individual states $\ket{\xi_l} \in \H_X$ in their Schmidt decomposition are pairwise orthogonal.
Therefore $\overline{\Sin}$ is linearly independent in $\H_X$. 
We use a reference system $\H_R$ with the smallest possible number of qubits $\lceil \log_2(r)\rceil$ such that the reference system with $\dim(\H_R) = 2^{\lceil \log_2(r)\rceil}$ is large enough to hold states of Schmidt rank $r$. 
Using $\overline{\Sin}$, we create the final set of input states as
\begin{equation}
    \Sin = \{ (P \otimes Q) \ket{\gamma_j} \;\mid\; \ket{\gamma_j} \in \overline{\Sin} \},
\end{equation}
with unitary operators $P \in \mathcal{U}(d)$ and $Q \in \mathcal{U}(2^{\lceil \log_2(r)\rceil})$ chosen uniformly at random. 
Since unitary transformations preserve the inner product, the states in $\Sin$ are also pairwise orthogonal.

The set $\overline{\Sin}$ is created by assigning random normalized coefficients $\sqrt{c_{j,k}}$ to a subset of the $r$-dimensional Bell basis~\cite{Bennett_1993, Wang_2017}:
\begin{equation}
    \ket{\gamma_j} = \sum_{k=0}^{r-1}\sqrt{c_{j,k}} \;\ket{r(j-1) + k}_X \otimes \ket{k}_R.
\end{equation} 
It remains to show that the states in $\overline{\Sin}$ are pairwise orthogonal and are linearly independent in $\H_X$.
For arbitrary states $\ket{x_1}, \ket{x_2} \in \H_X$ and $\ket{y_1}, \ket{y_2} \in \H_R$, $\left\langle \ket{x_1} \otimes \ket{y_1}\;\mid\; \ket{x_2} \otimes \ket{y_2}\right\rangle = \braket{x_1 | x_2} \cdot \braket{y_1 | y_2}$.
Therefore, we show that for two states $\ket{\gamma_a}, \ket{\gamma_b} \in \overline{\Sin}$, $1 \leq b < a \leq t$, the inner product $\braket{r(a-1)+k_a \;\mid\; r(b-1) + k_b} = 0$ for all $0 \leq k_a, k_b < r$.
This is the inner product of two states in the computational basis.
Therefore, we have to show that $r(a-1)+k_a \neq r(b-1) + k_b$, to prove the claim.
Assume the contrary, then
\begin{align}
\begin{array}{c r c l}
    & r(a-1) + k_a &=& r(b-1) + k_b\\
    \Leftrightarrow & ra - r + k_a &=& rb - r + k_b\\
    \Leftrightarrow &  r(a-b) &=& k_b - k_a.\label{eq:prooforthosample}
\end{array}
\end{align}
Since $r\leq r(a-b) \leq r(t-1)$ and $-r < (k_b - k_a) < r$, the left-hand side in the last line of \cref{eq:prooforthosample} is greater than the right-hand side, which shows that $r(a-1)+k_a \neq r(b-1) + k_b$ and $\braket{r(a-1)+k_a \;\mid\; r(b-1) + k_b} = 0$. 
Furthermore, since the states $\ket{r(j-1) + k}$ are pairwise orthogonal, it also follows that they are linearly independent.

\subsection{Linear Dependent Training Samples}
For the third experiment (\Cref{subsec:exp3}), we generate training samples with a fixed even Schmidt rank $r$ that are OPR and linearly dependent in $\H_X$ with $\dim(\Hknown) = r$.
First, we generate a random set of pairwise orthogonal vectors in $\H_X$ of size $r$: $B_1 = \{P\ket{k} |\;0 \leq k \leq r\}$, by using a randomly generated unitary $P \in \mathcal{U}(d)$.
From this set, we create $t-1$ other sets $B_j$ that only contain states that are linear combinations of states in $B_1$ as follows.
To generate the set $B_j$, the set $B_1$ is first partitioned into $r/2$ non-intersecting subsets $B_1 = \{\ket{b_1}, \ket{b_2}\}\cup \cdots \cup \{\ket{b_{r-1}}, \ket{b_r}\}$ each containing two states $\ket{b_i},\ket{{b_{i+1}}} \in B_1$.
For each of these subsets of $B_1$, we express its elements as vectors $\alpha \ket{b_i} + \beta \ket{b_{i+1}}$ in a two-dimensional vector space with $\alpha, \beta \in \{0,1\}$.
Using a random unitary $T \in \mathcal{U}(2)$, two new states are created:
\begin{equation}
    \begin{array}{r c l}
    T\ket{b_i} &=& \alpha_1 \ket{b_i} + \beta_1 \ket{b_{i+1}},\\
    T\ket{b_{i+1}} &=& \alpha_2 \ket{b_i} + \beta_2 \ket{b_{i+1}}.
    \end{array}
\end{equation}
This process is performed for each of the $r/2$ possible subsets of $B_1$, which results in $r$ different states $B_j = \left\{ \ket{b_{j,1}}, \dots, \ket{b_{j,r}} \right\}$ that all can be expressed as linear combinations of states in $B_1$. 
Furthermore, the resulting set $B_j$ is pairwise orthogonal. 
Using $B_j$ and a random unitary $Q_j \in \mathcal{U}(d)$, the training input $\ket{\psi_j}$ is created as 
\begin{equation}
    \ket{\psi_j} = \sum_{k=1}^r \sqrt{c_{j,k}}\;\ket{b_{j,k}}_X \otimes Q_j \ket{k}_R.
\end{equation}
Instead of using only enough qubits in the reference system to hold $r$ states, we set the dimension $\dim(\H_R) = d$ for these inputs. 
This allows a wider variety of reference system states.
This construction itself does not imply that the created training samples are OPR. 
However, we evaluate this property for each set of generated inputs and generate new inputs if this property is not satisfied.

\end{document}